\documentclass[journal]{IEEEtran}


\usepackage{amsmath,amssymb,amsthm,amsfonts,mathtools} 
\usepackage{dsfont,eucal,bbm,bm,nicefrac} 
\usepackage{graphicx,float,subcaption,booktabs} 
	\graphicspath{{./figures/}}
\usepackage{algorithm,algorithmic} 
\usepackage{hyperref} 
\usepackage{tikz,xcolor,soul} 
\usepackage{cite}
\usepackage{textcomp}
\usepackage{hyperref}
\hypersetup{hidelinks}
\usepackage{balance}

\newtheorem{theorem}{Theorem}
\newtheorem{corollary}{Corollary}[theorem]

\newtheorem{assumption}{Assumption}
\newtheorem{remark}{Remark}

\newtheorem{definition}{Definition}

\newcommand{\T}{\ensuremath{\mathrm{T}}} 
\newcommand{\dd}{\ensuremath{\mathrm{d}}} 

\newcommand{\pbra}[1]{\ensuremath{\left( #1\right)}}
\newcommand{\sbra}[1]{\ensuremath{\left[ #1\right]}}
\newcommand{\cbra}[1]{\ensuremath{\left\{ #1\right\}}}
\newcommand{\abra}[1]{\ensuremath{\left< #1\right>}}

\newcommand{\pder}[2]{\ensuremath{\frac{\partial #1}{\partial #2}}}
\newcommand{\E}[1]{\ensuremath{\mathbb{E}\left[ #1\right]}}

\newcommand{\Ep}[2]{\ensuremath{\mathbb{E}_{#1}\left[ #2\right]}}


\DeclareMathOperator*{\argmin}{arg\,min}

\DeclareMathOperator*{\minimize}{minimize}
\DeclareMathOperator*{\maximize}{max.}


\begin{document}

\title{
Real-time Hybrid System Identification \\
with Online Deterministic Annealing}

\author{Christos N. Mavridis$^*$, 
and Karl Henrik Johansson$^*$  
%
\thanks{$^*$%
Division of Decision and Control Systems, 
School of Electrical Engineering and Computer Science,
KTH Royal Institute of Technology, Stockholm.
{\tt\small emails:\{mavridis,kallej\}@kth.se}.}%
%
%
\thanks{%
Research partially supported 
by the Swedish Foundation for Strategic Research (SSF) grant IPD23-0019.%
}%
}

\maketitle
 \thispagestyle{empty}
\pagestyle{empty}

\begin{abstract}
We introduce a real-time identification method for discrete-time state-dependent switching systems in both the input--output and state-space domains. In particular, we design a system of adaptive algorithms running in two timescales; a stochastic approximation algorithm implements an online deterministic annealing scheme at a slow timescale and estimates the mode-switching signal, and a recursive identification algorithm runs at a faster timescale and updates the parameters of the local models based on the estimate of the switching signal. We first focus on piece-wise affine systems and discuss identifiability conditions and convergence properties based on the theory of two-timescale stochastic approximation. In contrast to standard identification algorithms for switched systems, the proposed approach gradually estimates the number of modes and is appropriate for real-time system identification using sequential data acquisition. The progressive nature of the algorithm improves computational efficiency and provides real-time control over the performance-complexity trade-off. Finally, we address specific challenges that arise in the application of the proposed methodology in identification of more general switching systems. Simulation results validate the efficacy of the proposed methodology.
\end{abstract}

\begin{IEEEkeywords}
Switched System Identification, Piecewise Affine System Identification, Online Deterministic Annealing.
\end{IEEEkeywords}


\vspace{-1em}
\section{Introduction}
\label{Sec:Introduction}

Switched systems, described by interacting continuous and discrete dynamics, are
a powerful modeling tool in the analysis of systems 
where logic and continuous processes are interlaced, as in most complex cyber-physical systems.
In addition to being able to describe switching dynamics, switched systems can be used as a tool to 
approximate highly non-linear dynamics by a collection of simpler models, and boost 
model explainability and robustness, by decomposing the behavior of a complex system into 
sub-systems where first principles and domain knowledge can be used for precise model tuning 
\cite{garulli2012survey,liberzon2003switching}.
As a result, switched systems have attracted significant attention in the control community.

However, first principles modelling is often too complicated and sub-optimal, and a switched
model needs to be identified on the basis of observations.
The majority of the work in this area is based on 
piece-wise affine (PWA) systems, a class of state-dependent switched systems
with important applications in identification, verification, and control synthesis 
of switched and nonlinear systems 
\cite{liberzon2003switching,paoletti2007identification,moradvandi2023models,bemporad2000observability}. 
PWA systems are a collection of affine dynamical systems, 
indexed by a discrete-valued switching variable (mode) that depends 
on a partitioning of the state-input domain
into a finite number of polyhedral regions \cite{liberzon2003switching,paoletti2007identification}. 
The input--output representation of PWA systems is the class of 
piece-wise affine auto-regressive exogenous (PWARX) systems with 
the switching signal depending on a partitioning of the domain of a vector 
containing the recent history of input--output pairs.
As the problem of identifying a PWA system can be challenging
\cite{vidal2002observability,petreczky2013realization},
%
most existing approaches focus on offline identification methods
\cite{roll2004identification,ferrari2003clustering}.

\vspace{-1em}
\subsection{Contribution and Outline}

%
In this work, we propose a two-timescale stochastic optimization approach for 
real-time state-dependent switched system identification in both input--output and state-space representations.
We first focus on the well-studied case of PWA and PWARX systems. In Section \ref{Sec:SwitchedPreliminaries} we present the 
realization and identifiability conditions for PWA systems, and in Theorem \ref{thm:identification}
of Section \ref{sSec:pwa-identifiability} 
we provide the identifiability conditions for state space PWA systems in the form of a persistence of excitation (PE) criterion.
In Section \ref{Sec:formulation}, we formulate the state-dependent switching system identification problem as a combined identification and prototype-based learning problem and in Sections \ref{Sec:ODA} and \ref{Sec:PWAID} we develop a two-timescale stochastic approximation algorithm to solve it in real-time.

In particular, in Section \ref{Sec:ODA} 
we build upon the online deterministic annealing approach \cite{mavridis2023annealing} to construct a stochastic approximation algorithm that estimates the mode-switching signal, as well as the number of modes, through a bifurcation phenomenon, studied in Section \ref{sSec:bifurcation}.
In Section \ref{Sec:PWAID} a second stochastic approximation algorithm based on standard 
adaptive filtering, running at a faster timescale,
is developed to update the parameters of the local models based on the estimate of the 
switching signal. 
The convergence properties of this system of recursive algorithms are studied in Theorem \ref{thm:two_timescales} of Section \ref{sSec:combined},
and the applicability of the proposed approach in more general state-dependent switching systems is discussed in Section \ref{Sec:SwitchedID}.
%
Finally, in Section \ref{Sec:Results}, simulation results validate the efficacy of the proposed approach in PWA systems.

\vspace{-1em}
\subsection{Related Work}

Most existing switched system identification methods can be categorized by 
the problem formulation used as
optimization-based \cite{bianchi2020alternating,roll2004identification,bako2011identification}, likelihood-based 
\cite{bemporad2018fitting,piga2020estimation,piga2020rao},
algebraic \cite{vidal2003algebraic,vidal2008recursive}, 
or clustering-based \cite{wang2020pwa,bako2019piecewise,ferrari2003clustering,gegundez2008identification,nakada2005identification}, and by 
the method used as offline \cite{bianchi2020alternating,li2021rapid,mejari2020recursive,ferrari2003clustering,juloski2005bayesian,bako2011recursive,vidal2008recursive} 
or online \cite{du2021novel}. 
%
For an extensive review of existing work the readers are referred to 
\cite{moradvandi2023models,lauer2019hybrid,garulli2012survey,paoletti2007identification,liberzon2003switching} and the references therein.

Algebraic methods are mainly based on transforming a Switched AutoRegressive eXogenous (SARX) model to a ``lifted'' ARX model 
that does not depend on the switching sequence \cite{vidal2003algebraic,vidal2008recursive}. 
Optimization-based methods rely on solving a large mixed-integer program, which is an NP hard problem that remains tractable only for simple models and small data sets \cite{bako2011identification,roll2004identification,paoletti2022iterative}.
Therefore, many works focus on relaxation techniques over the same problem \cite{tang2021expectation,hartmann2015identification,bako2011recursive}, 
that include convexification and expectation-maximization approaches.
Finally, clustering-based methods are optimization-based methods 
that make use of unsupervised learning to estimate the partition of the
domain that is needed for the switching signal 
\cite{wang2020pwa,ferrari2003clustering,gegundez2008identification,nakada2005identification,baptista2011split,ivanescu2011employing,bianchi2020alternating,yu2023randomized}.
%

Most hybrid identification approaches are offline methods
that first classify each observation 
and estimate the local model parameters (either simultaneously or iteratively), 
and then reconstruct the partition of the switching signal 
\cite{tang2021expectation,bianchi2020alternating,wang2020pwa,ferrari2003clustering,gegundez2008identification}. 
%
%
%
In our recent work, we have proposed the use of the online deterministic annealing approach as a
clustering method to estimate the partition of the switching signal in real-time 
\cite{mavridis2023identification,mavridis2024state}.
%
%
In this work, we extend these methods to
provide a complete study of a real-time prototype-based learning algorithm that (i) provides an inherent mechanism to adaptively estimate a set of modes with minimal cardinality, (ii) constitutes a unified switched system identification method for both input--output and state-space representations, and (iii) investigates extensions to more general switching systems.

\vspace{-1em}
\subsection{Notation}

The sets $\mathbb R$ and $\mathbb Z$ represent the sets of real and integer numbers, respectively, while
$\mathbb Z_+$ represents the set of non-negative integers. 
For a real matrix $A\in\mathbb{R}^{n\times m}$, 
$A^\T\in\mathbb{R}^{m\times n}$ denotes its transpose and $\text{vec}(A)\in\mathbb R^{mn}$ the vectorization of $A$.
The $n\times n$ identity matrix is denoted $I_{n}$.
$A \succeq 0$ is a positive semi-definite matrix, and the condition $A\succeq B$ is understood as $A-B\succeq 0$.
Unless otherwise specified, random variables $\mathcal{X}:\Omega\rightarrow \mathbb R^d$ are defined in 
a probability space $(\Omega,\mathbb F,\mathbb P)$.
The probability of an event is denoted 
$\mathbb P\sbra{\mathcal X\in S} \coloneq \mathbb P\sbra{\omega\in\Omega: \mathcal X(\omega)\in S}$, 
and the expectation operator $\E{\mathcal X} = \int_\Omega \mathcal X \textrm{d}\mathbb P$.
In case of multiple random variables $(\mathcal X,\mathcal Y)$ and a deterministic function $f$, 
the expectation operator $\E{f(\mathcal X,\mathcal Y)}$ 
is understood with respect to the joint probability measure, while 
$\E{\mathcal X|\mathcal Y} \coloneq \E{\mathcal X|\sigma(\mathcal Y)}$ denotes 
the expectation of $\mathcal X$ conditioned to the $\sigma$-field of $\mathcal Y$.
Stochastic processes $\cbra{\mathcal X(k)}_k$, $k\in\mathbb Z_+$, are defined in the filtered probability
space $(\Omega,\mathbb F, \cbra{\mathcal F_n}_n,\mathbb P)$, where $\mathcal F_n = \sigma(\mathcal X(k)|k\leq n)$, $k\in\mathbb Z_+$, is the natural filtration. 
The normal distribution with mean value $\mu$, and covariance matrix $\Sigma$ is denoted $\mathcal{N}(\mu, \Sigma)$.
The indicator function of the event $\sbra{\mathcal X\in S}$ is denoted $\mathds{1}_{\sbra{\mathcal X\in S}}$ 
and $\otimes$ denotes the Kronecker product.
Finally, ``$\min$'' (resp. ``$\max$'') defines the minimization (resp. maximization) operator while ``$\minimize$'' (resp. ``$\maximize$'') defines a minimization (resp. maximization) problem.



\section{Switched and Piecewise Affine Systems}
\label{Sec:SwitchedPreliminaries}

A general discrete-time switched system is described by:
\begin{equation}
\begin{aligned}
    x_{t+1} &= f_{\sigma_t}(x_t, u_t) + w_t \\
    y_{t} &= g_{\sigma_t}(x_t, u_t) + v_t,\quad t\in\mathbb{Z}_{+}
\end{aligned}
\label{eq:ss}
\end{equation}
where $x_t\in \mathbb{R}^n$ is the state vector of the system, 
$u_t\in \mathbb{R}^p$ the input, $y_t\in \mathbb{R}^q$ the output,
and $w_t\in \mathbb{R}^n$ and $v_t\in \mathbb{R}^q$ are noise terms.
The signal $\sigma_t\in \cbra{1, \ldots, s}$ 
defines the mode which is active at time $t$.
System \eqref{eq:ss} is a switched affine system when it 
can be expressed as:
\begin{equation}
\begin{aligned}
    x_{t+1} &= A_{\sigma_t} x_t + B_{\sigma_t} u_t + \bar f_{\sigma_t} + w_t \\
    y_{t} &= C_{\sigma_t} x_t + D_{\sigma_t} u_t + \bar g_{\sigma_t} + v_t,\quad t\in\mathbb{Z}_{+}.
\end{aligned}
\label{eq:sass}
\end{equation}
The matrices $A_i\in \mathbb{R}^{n\times n}$, $B_i\in \mathbb{R}^{n\times p}$,
$C_i\in \mathbb{R}^{q\times n}$, $D_i\in \mathbb{R}^{q\times p}$, 
$\bar f_i\in \mathbb{R}^{n}$, and $\bar g_i\in \mathbb{R}^{q}$ define the affine dynamics for 
each mode $i\in \cbra{1, \ldots, s}$.
%
System \eqref{eq:sass} is PWA when $\sigma_t$ 
is defined according to a polyhedral partition of the state and input space, i.e., when
\begin{equation}
    \sigma_t = i \iff \begin{bmatrix} x_t \\ u_t \end{bmatrix} \in R_i\subset R,
\end{equation}
where $R_i$, $i=1,\ldots,s$, are convex polyhedra defining a 
partition of the state-input domain $R\subseteq\mathbb{R}^{n+p}$, that is when $R_i\cap R_j = \emptyset$ for $i\neq j$, and $\bigcup_i R_i = R$.

Switched affine systems can be expressed in input--output form as 
(SARX) systems of fixed orders $n_a$, $n_b$, such that
for every component $y_t^{(i)}\in \mathbb R$ of the output vector $y_t\in \mathbb R^q$
it holds:
\begin{equation}
\begin{aligned}
    y_t^{(i)} = \bar\theta_{\sigma_t}^{(i)\T} \begin{bmatrix} r_t \\ 1 \end{bmatrix} + \bar e_t^{(i)},\
    i = 1,\ldots, q,
\end{aligned}
    \label{eq:sarx}
\end{equation}
%
where the regressor vector $r_t\in\mathbb{R}^{\bar d}$, $\bar d=qn_a+p(n_b+1)$, is defined by
\begin{equation}
    r_t = [y_{t-1}^\T \ldots y_{t-n_a}^\T u_{t}^\T u_{t-1}^\T \ldots u_{t-n_b}^\T]^\T\in\mathbb{R}^{\bar d}.
    \label{eq:r-vector}
\end{equation}
The parameter vectors $\bar\theta_j^{(i)}\in\mathbb{R}^{\bar d+1}$, $j\in\cbra{1,\ldots, s}$, 
define each ARX mode, and $\bar e_t\in\mathbb{R}^q$ is a noise term. 
%
%
%
Similarly, \eqref{eq:sarx} is PWARX if 
\begin{equation}
    \sigma_t = i \iff r_t \in P_i\subset P\subseteq \mathbb{R}^{\bar d},
    \label{eq:sigma-pwarx}
\end{equation}
and $\cbra{P_i}_{i=1}^s$ define a polyhedral partition
of $P\subseteq \mathbb{R}^{\bar d}$.

\subsection{Realization and Identification of PWARX Models}
\label{sSec:pwarx-identifiability}



Every observable switched affine system admits a SARX representation \cite{weiland2006equivalence}. 
Necessary and sufficient conditions for input--output realization of SARX and PWARX 
systems are given in \cite{paoletti2008necessary}, and \cite{paoletti2009input}, respectively. 
%
%
It is worth mentioning, however, that the number of modes and parameters 
can grow considerably when a PWA state-space system is converted into a 
minimum-order equivalent PWARX representation \cite{paoletti2009input}.
%


%
In spite of the increasing attention received by SARX and PWARX system identification, 
there are currently only few results on the identifiability of these systems
\cite{liberzon2003switching,paoletti2007identification,petreczky2010identifiability}.
%
%
%
%
The general identification problem for a PWARX system of the form \eqref{eq:sarx}-\eqref{eq:sigma-pwarx} 
can be formulated as a stochastic optimization problem over the parameters 
$\cbra{n_a, n_b, s, \cbra{\theta_i}_{i=1}^s, \cbra{P_i}_{i=1}^s}$.
We make the following assumption that will allow us to concentrate on the 
properties of PWARX identification, assuming known $(\tilde n_a, \tilde n_b)$ 
subject to potential computational bounds.


\begin{assumption}
Upper bounds $(\tilde n_a,\tilde n_b)$ on the orders of the model $(n_a,n_b)$ are known.
\label{as:orders}
\end{assumption}

\subsection{Realization and Identification of PWA State-Space Models}
\label{sSec:pwa-identifiability}



The problem of identifying a state-space representation of a switched affine system can be challenging. 
%
%
In particular, identifiability issues arise regarding 
the characterization of minimality of discrete-time switched linear systems \cite{mavridis2024state,vidal2002observability,petreczky2013realization}.
In this work, to ensure uniqueness of the realizations, given that all subsystems $i\in \cbra{1, \ldots, s}$ share the same state space, 
and simplify the presentation of our methodology, we make the following assumptions.

\begin{assumption}
$C_i = C$, $\forall i\in \cbra{1, \ldots, s}$ in system \eqref{eq:sass}. 
\label{as:C}
\end{assumption}
\begin{assumption}
We assume no affine dynamics, i.e., $\bar f_{\sigma_t}=0$, $\bar g_{\sigma_t}=0$, 
no feed-forward terms, i.e., $D_{\sigma_t}=0$, full observability, i.e., $C=I_n$,
and same zero-mean statistics for the error terms $w_t$ and $v_t$ for every mode of the system.
\label{as:simplifications}
\end{assumption}
%
Assumption \ref{as:C} implies that the order $n$ is known (observed) and enforces that the set of observations 
is acquired using the same observation mechanism, which leads to the realization of \eqref{eq:sass} being unique.
%
%
Assumptions \ref{as:simplifications} simplify the presentation of the proposed methodology without loss of generality. 
Together, Assumptions \ref{as:C} and \ref{as:simplifications} allow for the joint modeling of PWARX and state-space PWA systems, as defined in Section \ref{Sec:formulation}.

In addition to the realizations of the local systems being non-unique,
minimality and identifiability of the switched system does not necessarily imply that of the local subsystems
\cite{petreczky2010identifiability}.
In Theorem \ref{thm:identification}, we describe the conditions under which the local linear models of \eqref{eq:sass} (under Assumptions \ref{as:C}--\ref{as:simplifications})
can be identified, even when a subset of them is not controllable (minimal) in isolation.
%
%

%
\begin{theorem}
Consider a bounded-input bounded-output linear discrete-time system of the form: 
\begin{equation}
\begin{aligned}
    x_{t+1} &= A x_t + B u_t,\quad t\in\mathbb{Z}_{+} \\
    y_t &= x_t,
\end{aligned}
    \label{eq:linear}
\end{equation}
where $x_t\in \mathbb{R}^n$, 
$u_t\in \mathbb{R}^p$, $A\in \mathbb{R}^{n\times n}$, and $B\in \mathbb{R}^{n\times p}$.
Denote $r_t = [x_t^\T u_t^\T]^\T$.
Then, if there exist some $\alpha,\beta, T>0$ such that 
\begin{equation}
    \alpha I_{n+p} \preceq \sum_{\tau=t}^{t+T} r_t r_t^\T \preceq \beta I_{n+p},\quad \forall t\geq 0,
    \label{eq:PE}
\end{equation}
the augmented parameter matrix $\hat \Theta_t = [\hat A_{t} | \hat B_{t}]$ updated by the recursion 
\begin{equation}
    \hat \Theta_{t+1} = \hat \Theta_t - \gamma \pbra{\hat \Theta_t r_t - x_{t+1}} r_t^\T,\quad t\geq 0,
    \label{eq:theta_recursion}
\end{equation}
for some $\gamma>0$, asymptotically converges to $\Theta = [A | B]$. 
\label{thm:identification}
\end{theorem}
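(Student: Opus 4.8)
The plan is to track the estimation error $\tilde\Theta_t = \hat\Theta_t - \Theta$. Because the data are generated by \eqref{eq:linear} without noise, $x_{t+1}=\Theta r_t$ holds exactly, so $\hat\Theta_t r_t - x_{t+1} = \tilde\Theta_t r_t$ and recursion \eqref{eq:theta_recursion} collapses to the linear homogeneous update $\tilde\Theta_{t+1} = \tilde\Theta_t\pbra{I_{n+p} - \gamma r_t r_t^\T}$. Transposing, each column $v_t$ of $\tilde\Theta_t^\T$ satisfies $v_{t+1} = \pbra{I_{n+p}-\gamma r_t r_t^\T}v_t$, so it suffices to prove that this linear time-varying recursion is uniformly (exponentially) stable under the persistence-of-excitation condition \eqref{eq:PE}; then $\hat\Theta_t\to\Theta$ follows column by column.

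For this I would use the Lyapunov function $V_t = \norm{v_t}^2$. Since every summand in \eqref{eq:PE} is positive semidefinite, each $r_\tau r_\tau^\T$ is itself $\preceq\beta I_{n+p}$, hence $\norm{r_t}^2\le\beta$ for all $t$; choosing any step size $\gamma\in(0,2/\beta)$ then makes $I_{n+p}-\gamma r_t r_t^\T$ nonexpansive, and a direct expansion gives $V_{t+1} = V_t - \pbra{2\gamma-\gamma^2\norm{r_t}^2}\pbra{r_t^\T v_t}^2 \le V_t - c\,\pbra{r_t^\T v_t}^2$, where $c = \gamma(2-\gamma\beta)>0$. In particular $\{V_t\}$ is non-increasing, and summing the inequality over a window $\tau=t,\dots,t+T$ yields $V_{t+T+1}\le V_t - c\sum_{\tau=t}^{t+T}\pbra{r_\tau^\T v_\tau}^2$.

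The crux --- and the step I expect to be the main obstacle --- is to bound $\sum_{\tau=t}^{t+T}\pbra{r_\tau^\T v_\tau}^2$ below by a positive multiple of $V_t$. The lower bound in \eqref{eq:PE} only gives $\sum_{\tau=t}^{t+T}\pbra{r_\tau^\T v_t}^2 = v_t^\T\pbra{\textstyle\sum_{\tau=t}^{t+T}r_\tau r_\tau^\T}v_t \ge \alpha V_t$, with the ``frozen'' $v_t$ in place of $v_\tau$. I would close this gap by writing $v_\tau - v_t = -\gamma\sum_{j=t}^{\tau-1}r_j\pbra{r_j^\T v_j}$, estimating $\abs{r_\tau^\T(v_\tau-v_t)} \le \gamma\beta\sum_{j=t}^{t+T}\abs{r_j^\T v_j}$ via $\norm{r_j}^2\le\beta$, and invoking Cauchy--Schwarz over the $T+1$ indices, which leads to $\sum_\tau\pbra{r_\tau^\T v_t}^2 \le 2\pbra{1+(T+1)^2(\gamma\beta)^2}\sum_\tau\pbra{r_\tau^\T v_\tau}^2$ and hence $\sum_\tau\pbra{r_\tau^\T v_\tau}^2 \ge \alpha V_t\,/\,\pbra{2\pbra{1+(T+1)^2(\gamma\beta)^2}}$.

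Combining the last two displays gives $V_{t+T+1}\le\rho V_t$ with $\rho = 1 - c\alpha\,/\,\pbra{2\pbra{1+(T+1)^2(\gamma\beta)^2}}$, and one verifies $\rho\in[0,1)$ using $\alpha\le\beta$ (forced by \eqref{eq:PE}) and $c\le 1/\beta$. Iterating along the subsequence $\{k(T+1)\}_{k\ge0}$ yields $V_{k(T+1)}\le\rho^k V_0\to0$, and the monotonicity of $\{V_t\}$ upgrades this to $V_t\to0$; thus $v_t\to0$ for every column, i.e.\ $\tilde\Theta_t\to0$ and $\hat\Theta_t\to\Theta=[A\mid B]$. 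A shorter route would be to directly invoke the classical uniform-exponential-stability result for $v_{t+1}=(I-\gamma r_t r_t^\T)v_t$ under persistence of excitation from the adaptive estimation literature, but I would keep the self-contained Lyapunov argument for transparency.
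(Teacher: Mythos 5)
Your proof is correct, and it follows the same skeleton as the paper's: reduce \eqref{eq:theta_recursion} on noise-free data to the homogeneous error dynamics $\tilde\Theta_{t+1}=\tilde\Theta_t\pbra{I_{n+p}-\gamma r_t r_t^\T}$, take the squared norm as a Lyapunov function, and use the PE window of length $T+1$ to get a uniform contraction $V_{t+T+1}\le\rho V_t$. The difference is in how the contraction is established. The paper vectorizes the error matrix, writes the iteration with a transition matrix $\Phi(\tau;t)$, and outsources the key implication (PE of $\cbra{r_\tau}$ $\Rightarrow$ negative-definite windowed Lyapunov decrement) to cited results of Anderson and Jenkins et al.; you instead work column by column and prove that implication from scratch via the standard ``swapping'' argument, replacing the frozen $v_t$ by $v_\tau$ at the cost of a factor $2\pbra{1+(T+1)^2\gamma^2\beta^2}$. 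Your route buys two things the paper leaves implicit: it is self-contained, and it makes explicit the step-size restriction $\gamma\in(0,2/\beta)$ (together with the role of the upper bound in \eqref{eq:PE} in guaranteeing $\norm{r_t}^2\le\beta$ and hence nonexpansiveness), which is genuinely needed for the stated conclusion ``for some $\gamma>0$'' and which the paper's proof absorbs silently into its unnamed constants $\alpha_1,\dots,\alpha_4$. All the individual estimates in your argument check out ($c\alpha\le 1$ so $\rho\ge 1/2$, monotonicity of $V_t$ upgrading subsequence convergence to full convergence), so no gaps.
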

\begin{proof}
See Appendix \ref{App:identification}.
\end{proof}

\noindent
As a result of Theorem \ref{thm:identification}, throughout this paper, we make the following assumption to ensure identifiability of \eqref{eq:sass} under Assumptions \ref{as:C}--\ref{as:simplifications}:
\begin{assumption}
All linear subsystems $i\in\cbra{1,\ldots,s}$ of \eqref{eq:sass} are asymptotically bounded, and 
the bounded control input $u_t$ is designed such that for every mode $i\in\cbra{1,\ldots,s}$ of \eqref{eq:sass},
there exist some $\alpha_i,\beta_i, T_i>0$ for which the following 
persistence of excitation condition holds:
\begin{equation}
    \alpha_i I_{n+p} \preceq \sum_{\tau=t}^{t+T_i} \begin{bmatrix} x_\tau x_\tau^\T & x_\tau u_\tau^\T \\ 
                                                  u_\tau x_\tau^\T & u_\tau u_\tau^\T \end{bmatrix} \preceq \beta_i I_{n+p},\ \forall t\geq 0.
    \label{eq:PE-as}
\end{equation}
\label{as:PE}
\end{assumption}
\begin{remark}
    Informally, condition \eqref{eq:PE-as} states that not every subsystem in \eqref{eq:sass} should be controllable (minimal), 
    as long as the boundaries of each mode (region $R_i$ in the state-input system) 
    are visited often enough. 
\end{remark}
\begin{remark}
    The assumption of asymptotic boundedness and controllability (thus, minimality) for all subsystems of \eqref{eq:sass} 
    would simplify the condition \eqref{eq:PE-as} to a persistence of excitation criterion for the input $u_t$ for each subsystem 
    separately.
    However, it is a limiting assumption in a practical sense. 
\end{remark}
%

\section{Switched System Identification as an Optimization Problem}
\label{Sec:formulation}

Consider a switched linear system of the form: 
\begin{equation}
\begin{aligned}
    \psi_{t} &= \Theta_{i} \phi_t + e_t,\\
    &= [\phi_t^\T \otimes I_{m}] \theta_i + e_t, \text{ if } \phi_t\in S_i,\ t\in \mathbb Z_+,\\
\end{aligned}
    \label{eq:pwa-theta}
\end{equation}
where $\psi_t \in \mathbb R^{m}$, $\phi_t\in \mathbb R^{d}$, 
$\sigma_t\in \cbra{1,\ldots, s}$,
$\Theta_i\in \mathbb R^{m\times d}$, for all $i=1,\ldots, s$, 
$\theta_i=\text{vec}(\Theta_i)\in\mathbb R^{md}$,
$e_t \in \mathbb R^{m}$ is a zero-mean noise signal, and 
$\cbra{S_i}_{i=1}^s$ define a polyhedral partition
of $S\subseteq \mathbb{R}^{d}$.
%
System \eqref{eq:sarx} can be written in the form \eqref{eq:pwa-theta} with 
$\psi_t = y_t \in \mathbb R^q$, $\phi_t = [r_t^\T 1]^\T \in \mathbb R^{\bar d+1}$, and 
$\Theta_i = [\bar\theta_i^{(1)} \ldots \bar\theta_i^{(q)}]^\T$, where $m=q$, and $d=\bar d+1$.
In addition, 
system \eqref{eq:sass} under Assumptions \ref{as:C}, \ref{as:simplifications} 
can be written in the form \eqref{eq:pwa-theta} with 
$\psi_t = x_{t+1} \in \mathbb R^n$, $\phi_t = [x_t^\T u_t^\T]^\T \in \mathbb R^{n+p}$, and 
$\Theta_i = [A_{i} | B_{i}]$, where $m=n$, and $d=n+p$.
Notice that, in this case, \eqref{eq:pwa-theta} holds for 
$(t-1)\in \mathbb Z_+$, i.e., $t\in\mathbb N$.

Under the identifiability conditions discussed in Section \ref{Sec:SwitchedPreliminaries},
the general identification problem for a switching system 
of the form \eqref{eq:pwa-theta} can be
formulated as a stochastic optimization problem over the parameters 
$\cbra{s, \cbra{\theta_i}_{i=1}^s, \cbra{S_i}_{i=1}^s}$, 
as follows:
\begin{equation}
    \minimize_{s,\cbra{\theta_i},\cbra{S_i}} \
    \E{\sum_{i=1}^s \mathds 1_{\sbra{\Phi\in S_i}}  d_\rho \pbra{\Psi, [\Phi^\T \otimes I_{m}] \theta_i}
    },
    \label{eq:id-problem}
\end{equation}
where $\Psi\in\mathbb R^m$ and $\Phi\in\mathbb R^{d}$ represent random variables,
realizations of which constitute the system observations, 
the nonnegative measure $d_\rho$
is an appropriately defined dissimilarity measure, and 
the expectation is taken with respect to the joint distribution of $(\Psi,\Phi)\in \mathbb R^{m+d}$
that depends on the system dynamics, the control input, and the noise term in \eqref{eq:pwa-theta}.
%

It is clear that the optimization problem \eqref{eq:id-problem} is computationally hard and becomes intractable as the number of modes and states increases. 
In particular, the number of modes $s$ is unknown and
completely alters the cardinality and the domain of the set of parameter vectors
$\cbra{\theta_i}_{i=1}^s$ that represent the dynamics of the system.
In addition, a parametric representation for the polyhedral regions $\cbra{S_i}$ 
should be defined. 
%
%

%

To represent the regions $\cbra{S_i}$, we will follow a Voronoi tessellation approach based on prototypes.
We introduce a set of parameters $\hat\phi \coloneq \cbra{\hat\phi_i}_{i=1}^K$, $\hat\phi_i\in S$ and
define the regions:
\begin{equation}
\Sigma_i = \cbra{\phi\in S: i=\argmin_j  d_\rho (\phi,\hat\phi_j)},\ i=1,\ldots,K.    
\label{eq:Sigma-partition}
\end{equation}
The measure $d_\rho$ can be designed 
such that the Voronoi regions $\Sigma_i$
are polyhedral, e.g., when $d_\rho$ is a squared Euclidean distance or any Bregman divergence, as will be explained in Section \ref{sSec:oda}.
In this sense, each $S_i$ can be mapped to a region $\Sigma_j$ (for $K=s$) or the union of a subset of $\cbra{\Sigma_j}$ (for $K>s$), according to a predefined rule, 
as will be explained in Section \ref{sSec:modes}. 
An illustration of this partition is given in Fig. \ref{fig:pwa-example}.

In addition to the prototype parameters $\cbra{\hat\phi_i}_{i=1}^K$, we also introduce 
a set of parameters $\hat\theta \coloneq \cbra{\hat\theta_i}_{i=1}^{K}$, $\hat\theta_i\in\mathbb R^{md}$, with 
each $\hat\theta_i$ associated with the region $\Sigma_i$ according to \eqref{eq:Sigma-partition}.
Representing the augmented random vector 
\begin{equation}
X = \begin{bmatrix} \Psi\\ \Phi \end{bmatrix}\in \Pi \subseteq \mathbb R^{m+d},
\label{eq:X}
\end{equation}
we can define a set of augmented codevectors $\mu\coloneq \cbra{\mu_i}_{i=1}^K$ as 
\begin{equation}
\mu_i = \begin{bmatrix} z(\phi,\hat\theta_i)\\ \hat\phi_i \end{bmatrix}\in \Pi,\ 
i=1,\ldots,K,
\label{eq:mu-definition}
\end{equation}
where the first component of each $\mu_i$%
\footnote{Throughout this paper we will use the notation $\mu_i$, 
$\mu_i(\hat\phi_i)$, $\mu_i(\hat\theta_i)$, $\mu_i(\hat\theta_i,\hat\phi_i)$, $\mu_i(\phi,\hat\theta_i,\hat\phi_i)$ interchangeably, to showcase 
the dependence on the variables of interest in each case.}
is a mapping $z(\phi,\hat\theta_i) = [\phi^\T \otimes I_{m}] \hat\theta_i$
that simulates the local model dynamics in \eqref{eq:pwa-theta}
with unknown parameters $\theta_i$, and the second component is a set of unknown codevectors $\hat\phi_i$ that define the partition in 
\eqref{eq:Sigma-partition}.

Problem \eqref{eq:id-problem} can then be decomposed into two interconnected stochastic optimization problems.
Assuming $\cbra{\hat\theta_i}_{i=1}^K$ are known, the optimization problem 
\begin{equation}
    \minimize_{\hat\phi} \
    \E{\sum_{i=1}^K \mathds 1_{\sbra{\Phi\in \Sigma_i(\hat\phi)}}  d_\rho \pbra{X(\Psi,\Phi), \mu_i(\hat\theta_i,\hat\phi_i)}
    } 
    \label{eq:clustering-problem}
\end{equation}
finds the optimal parameters $\cbra{\hat\phi_i}_{i=1}^K$ 
that define the partition $\cbra{\Sigma_i}_{i=1}^K$
subject to the joint distribution of $(\Psi,\Phi)$, 
and is, therefore, a mode switching signal identification problem.

On the other hand, assuming the partition $\cbra{\Sigma_i}_{i=1}^K$ (and, therefore, $\cbra{S_i}_{i=1}^s$) 
is known, the optimization problem 
\begin{equation}
    \minimize_{\hat\theta} \
    \E{\sum_{i=1}^K \mathds 1_{\sbra{\Phi\in \Sigma_i}}  d_\rho \pbra{\Psi, [\Phi^\T \otimes I_{m}] \hat\theta_i}
    } 
    \label{eq:filtering-problem}
\end{equation}
is a system identification problem for each mode of the system.

In Section \ref{Sec:ODA} we address the question of finding the optimal number $K$ 
according to a performance-complexity trade-off, as well as finding a mapping between 
$\cbra{\Sigma_i}_{i=1}^K$ and $\cbra{S_i}_{i=1}^{\hat s}$ for the lowest possible number $\hat s\geq s$.
In Section \ref{Sec:PWAID} we tackle the problem of 
estimating both $\hat\phi$ and $\hat\theta$ by
solving \eqref{eq:clustering-problem} and \eqref{eq:filtering-problem} 
as a system of interconnected stochastic optimization problems in real-time using principles from two-timescale stochastic approximation theory.
\begin{figure}[t]
\centering
\includegraphics[trim=0 0 0 0,clip,width=0.45\textwidth]{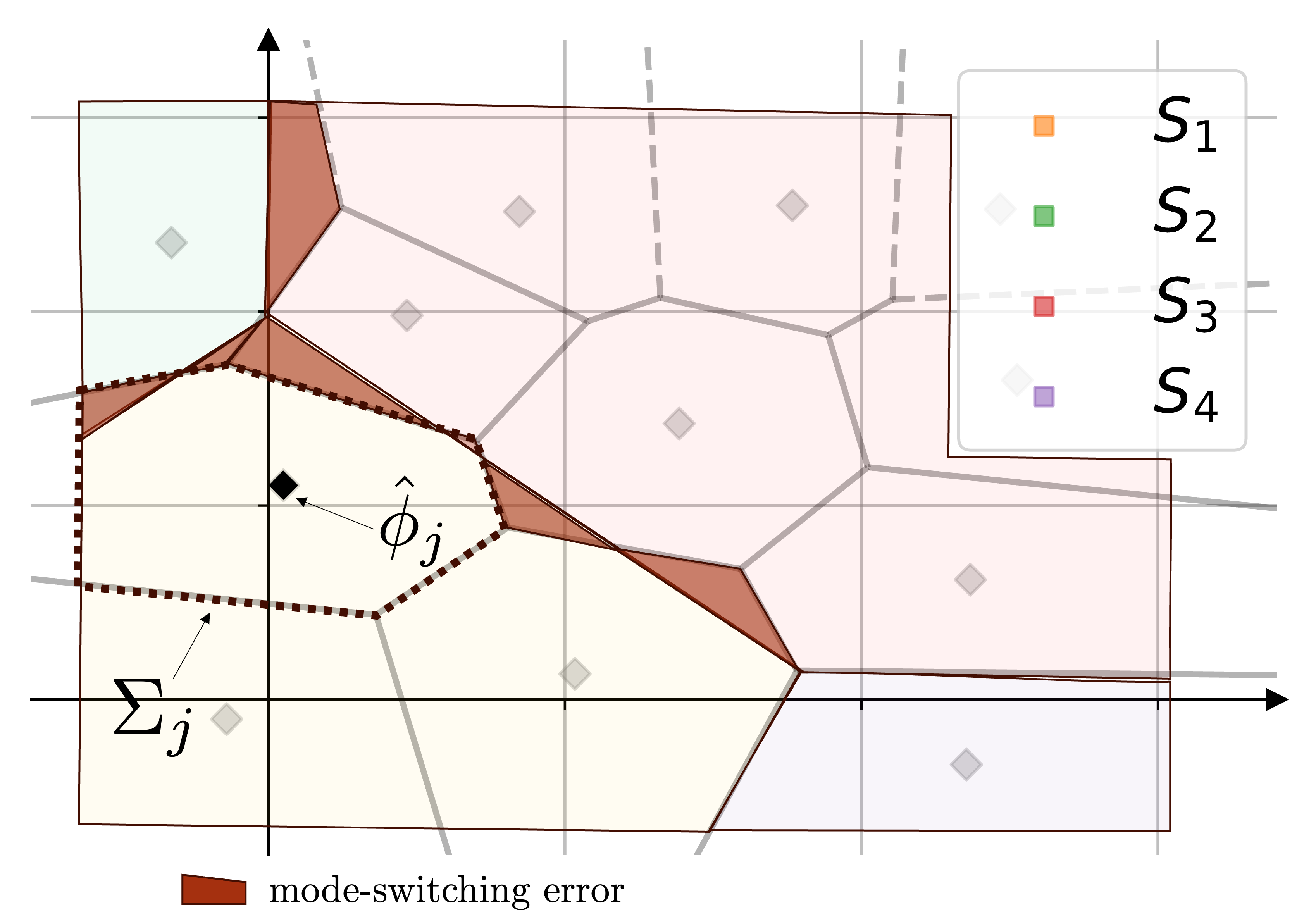}
\caption{Illustration of the partition $\cbra{S_i}_{i=1}^s$ ($s=4$) of the state-input space $S$ and its connection to the artificial partition $\cbra{\Sigma_j}_{j=1}^K$ ($K>s$). The optimal parameters $\cbra{\hat\phi_j}$ induce a partition $\cbra{\Sigma_j}$ that minimizes the mode switching error.}
\label{fig:pwa-example}
\vspace{-1em}
\end{figure}
%

\section{Mode Identification with Online Deterministic Annealing}
\label{Sec:ODA}


We aim to construct a recursive stochastic optimization algorithm to solve problem 
\eqref{eq:clustering-problem} while progressively estimating the number $K$ of the 
augmented codevectors $\cbra{\mu_i}_{i=1}^K$, an estimate $\hat s$ of the actual number of modes, 
and a mapping between $\cbra{\Sigma_i}_{i=1}^K$ and $\cbra{S_i}_{i=1}^{\hat s}$.
Recall that the observed data 
are represented by the random variable 
$X\in \Pi$ in \eqref{eq:X}, and
the augmented codevectors
$\cbra{\mu_i}_{i=1}^K$ are normally treated as constant parameters to be estimated.
To progressively estimate $K$ and $\hat s$, we will adopt the online deterministic annealing approach 
\cite{mavridis2023online,mavridis2023annealing}, 
and define a probability space over an arbitrary number of codevectors, 
while constraining their distribution using a maximum-entropy principle 
at different levels.
First we define
a quantizer $Q:\Pi \rightarrow \Pi$ 
as a stochastic mapping of the form:
\begin{equation}
Q(x) = \mu_i\ \text{ with probability } p(\mu_i|x).
\end{equation}
Then we formulate
the multi-objective optimization
\begin{equation}
\minimize_{\hat\phi} \ F_\lambda(\mu) = (1-\lambda) D(\mu) - \lambda H(\mu),\ \lambda\in[0,1),
\label{eq:F}
\end{equation}
where the dependence on $\hat\phi$ comes through $\mu(\hat\phi)$, the term
\begin{equation}
\begin{aligned}
D(\mu) &= \E{d_\rho\pbra{X,Q}}
=\int p(x) \sum_i p(\mu_i|x) d_\rho(x,\mu_i) ~\textrm{d}x
\end{aligned}
\label{eq:D}
\end{equation}	
is a generalization of the objective in (\ref{eq:clustering-problem}), and 
\begin{equation}
\begin{aligned}
H(\mu) &= \E{-\log P(X,Q)} 
\\&
=H(X) - \int p(x) \sum_i p(\mu_i|x) \log p(\mu_i|x) ~\textrm{d}x
\end{aligned}    
\label{eq:H}
\end{equation}
is the Shannon entropy.
This is now a problem of finding the locations $\cbra{\hat\phi_i}$ 
and the corresponding probabilities
$\cbra{p(\mu_i|x)=\mathbb P[Q=\mu_i|X=x]}$.

Notice that, for $p(\mu_i|x)=\mathds 1_{\sbra{\phi\in \Sigma_i(\hat\phi)}}$ and $\lambda=0$, 
\eqref{eq:F} is equivalent to \eqref{eq:clustering-problem}.
In that sense, \eqref{eq:F} introduces extra optimization parameters in the probabilities $\cbra{p(\mu_i|x)}$, and 
the parameter $\lambda$ that defines a homotopy $F_\lambda$.
However, the advantages of this approach are notable, and, perhaps counter-intuitively,
lead to numerical optimization solutions with several computational benefits.
On the one hand, the Lagrange multiplier $\lambda\in[0,1)$ controls the trade-off between 
$D$ and $H$, which, as will be shown, is a trade-off between performance and complexity.
%
On the other hand, the use of the conditional probabilities $\cbra{p(\mu_i|x)}$ allows for the definition of the entropy term $H$, 
which introduces several useful properties 
\cite{mavridis2023online,mavridis2023annealing,mavridis2021maximum,mavridis2021progressive,mavridis2020detection}.
In particular, as we will show in Section \ref{sSec:bifurcation}, 
reducing the values of $\lambda$ defines 
a direction that resembles an annealing process
\cite{mavridis2023online,rose1998deterministic} and induces
a bifurcation phenomenon, 
with respect to which, the number of unique codevectors $K_\lambda$ depends on $\lambda$ 
and is finite for any given value of
$\lambda>0$.
This process also introduces robustness with respect to initial conditions
\cite{mavridis2023online,mavridis2022risk}.
%

\subsection{Solving the Optimization Problem}
\label{sSec:oda}

To solve \eqref{eq:F} for a given value of $\lambda$, 
we successively minimize $F_\lambda$
first with respect to the 
association probabilities $\cbra{p(\mu_i|x)}$, 
and then with respect to the codevector locations $\mu$.
The solution of the optimization problem
\begin{equation}
\begin{aligned}
F_\lambda^*(\mu) &= \min_{\cbra{p(\mu_i|x)}} F_\lambda(\mu),\\
\quad\quad
\text{s.t.} & \sum_i p(\mu_i|x) = 1,
\end{aligned}
\label{eq:Fstar}
\end{equation}
is given by the Gibbs distributions \cite{mavridis2023multi}:
\begin{equation}
p^*(\mu_i|x) = \frac{e^{-\frac{1-\lambda}{\lambda}d_\rho(x,\mu_i)}}
			{\sum_j e^{-\frac{1-\lambda}{\lambda}d_\rho(x,\mu_j)}} ,~ \forall x\in \Pi.
\label{eq:gibbs}
\end{equation}
In order to minimize $F^*(\mu)$ with respect to $\hat\phi$ 
we set the gradients to zero
\begin{equation}
\frac \dd {\dd{\hat\phi}} F_\lambda^*(\mu) = \frac \dd {\dd\mu} F_\lambda^*(\mu) \frac {\dd \mu} {\dd{\hat\phi}} = 0 
\label{eq:dFdphi}
\end{equation}
where 
$\frac {\dd \mu} {\dd{\hat\phi}}=\begin{bmatrix} 0_{m\times d} \\ I_d \end{bmatrix}$, and 
\begin{equation}
\begin{aligned}
&\frac \dd {\dd{\mu}} F_\lambda^*(\mu) =
\frac \dd {\dd\mu} \pbra{ (1-\lambda)D(\mu) - \lambda H(\mu) }  
\\&
=
\sum_i \int p(x) p^*(\mu_i|x) \frac \dd {\dd\mu_i} d_\rho(x,\mu_i) ~ \dd x  = 0,
\end{aligned}
\label{eq:dFdmu}
\end{equation}
where we have used (\ref{eq:gibbs}) and direct differentiation with similar arguments as in \cite{mavridis2023multi}.
It follows that
$\frac \dd {\dd\hat\phi} F_\lambda^*(\mu) = 0 $
which implies that
\begin{equation}
\begin{aligned}
\int p(x) p^*(\mu_i|x) \frac \dd {\dd \mu_i} d_\rho(x,\mu_i) ~\dd x 
\begin{bmatrix} 0_{m\times d} \\ I_d \end{bmatrix}
= 0,\ \forall i.
\end{aligned}
\label{eq:M}
\end{equation}
Equation
(\ref{eq:M}) has a
closed-form solution if the dissimilarity measure $d_\rho$ 
belongs to the family of Bregman divergences
\cite{banerjee2005clustering,mavridis2023online},%
information-theoretic dissimilarity measures that 
include the squared Euclidean distance and 
the Kullback-Leibler divergence, and are defined as follows:

%
\begin{definition}[Bregman Divergence]
	Let $ \rho: S \rightarrow \mathbb{R}$, 
	be a strictly convex function defined on 
	a vector space $S\subseteq \mathbb{R}^d$ such that $\phi$  
	is twice F-differentiable on $S$. 
	The Bregman divergence 
	$d_{\rho}:H \times S \rightarrow \left[0,\infty\right)$
	is defined as:
	\begin{align*}
		d_{\rho} \pbra{x, \mu} = \rho \pbra{x} - \rho \pbra{\mu} 
							- \pder{\rho}{\mu} \pbra{\mu} \pbra{x-\mu},
	\end{align*}
	where $x,\mu\in S$, and 
 the continuous linear map 
	$\pder{\rho}{\mu} \pbra{\mu}: S \rightarrow \mathbb{R}$ 
	is the Fr\'echet derivative of $\rho$ at $\mu$.
	\label{def:BregmanD}
\end{definition}

Throughout this manuscript, we will assume that 
the dissimilarity measure $d_\rho$ in \eqref{eq:Sigma-partition} is a Bregman divergence, and, in particular, the squared Euclidean distance.
Then 
the solution to the optimization problem
\begin{equation}
    \minimize_{\hat\phi}~ F_\lambda^*\pbra{\mu(\hat\phi)},
    \label{eq:minFstar}
\end{equation}
where $F_\lambda^*(\mu)$ is the solution of \eqref{eq:Fstar} 
for a given $\lambda\in [0,1)$ and 
$p^*(\mu_i|x)$ is given by \eqref{eq:gibbs}, 
is given by Theorem \ref{thm:bregman_in_DA}.

\begin{theorem}
If $d_\rho:\Pi\times\Pi\rightarrow \mathbb R_+$ is a Bregman divergence,
then
\begin{equation}
\hat\phi_i^* 
= \frac{\int \phi p(x) p^*(\mu_i|x) ~\dd x}{p^*(\mu_i)}
\label{eq:mu_star}
\end{equation}
is a solution to the optimization problem \eqref{eq:minFstar}.
%
\label{thm:bregman_in_DA}
\end{theorem}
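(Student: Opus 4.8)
\emph{Proof proposal.} The plan is to solve the unconstrained minimization \eqref{eq:minFstar} through its first-order stationarity condition, which the excerpt has already reduced, via \eqref{eq:dFdphi}--\eqref{eq:M}, to the requirement that for every $i$
\begin{equation}
\int p(x)\, p^*(\mu_i|x)\, \frac{\dd}{\dd\mu_i} d_\rho(x,\mu_i)\, \dd x\ \frac{\dd\mu}{\dd\hat\phi} = 0 .
\label{eq:fop-plan}
\end{equation}
Granting the envelope-type argument behind \eqref{eq:dFdmu} --- the implicit dependence of $p^*(\mu_i|x)$ on $\hat\phi$ contributes nothing because $p^*$ already satisfies the stationarity condition of the inner problem \eqref{eq:Fstar}, and, with the association probabilities frozen, the entropy $H$ does not depend on $\mu$ --- what remains is to (i) evaluate the gradient of a Bregman divergence in its second argument, (ii) solve the resulting affine relation for $\hat\phi_i$, and (iii) check that the stationary point is in fact a minimizer.

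For (i), I would differentiate the expression in Definition \ref{def:BregmanD}: writing $d_\rho(x,\mu) = \rho(x) - \rho(\mu) - \pder{\rho}{\mu}\pbra{\mu}\pbra{x-\mu}$ and using that $\rho$ is twice F-differentiable, the two $\pder{\rho}{\mu}\pbra{\mu}$ contributions cancel and one is left with
\begin{equation}
\frac{\dd}{\dd\mu}\, d_\rho(x,\mu) = -\pbra{x-\mu}^\T \nabla^2\rho(\mu),
\label{eq:breg-grad-plan}
\end{equation}
where $\nabla^2\rho(\mu)$ denotes the Hessian of $\rho$ at $\mu$. The decisive structural fact is that $\nabla^2\rho(\mu)$ is positive definite, hence invertible, because $\rho$ is strictly convex.

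For (ii), I substitute \eqref{eq:breg-grad-plan} into \eqref{eq:fop-plan}. Since $\hat\phi_i$ enters only the lower block of $\mu_i$ in \eqref{eq:mu-definition} --- which is exactly the role of the selector $\frac{\dd\mu}{\dd\hat\phi}$ appearing in \eqref{eq:dFdphi} --- and taking $\rho(\cdot)=\rho_1(\cdot)+\rho_2(\cdot)$ to be separable across the $\Psi$- and $\Phi$-coordinates, so that $\nabla^2\rho$ is block diagonal (this holds for the squared Euclidean distance and for the product-form Bregman divergences used in practice), the left-hand side of \eqref{eq:fop-plan} collapses to
\begin{equation}
\pbra{\int p(x)\, p^*(\mu_i|x)\, \pbra{\phi - \hat\phi_i}\, \dd x}^\T \nabla^2\rho_2(\hat\phi_i) = 0,
\label{eq:linear-plan}
\end{equation}
with $\phi,\hat\phi_i$ the lower blocks of $x,\mu_i$; note that $\nabla^2\rho_2(\hat\phi_i)$ may be pulled outside the integral precisely because $\hat\phi_i$ does not depend on $x$ (unlike the upper block $z(\phi,\hat\theta_i)$). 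As $\nabla^2\rho_2(\hat\phi_i)$ is invertible, \eqref{eq:linear-plan} forces $\int p(x) p^*(\mu_i|x)(\phi-\hat\phi_i)\,\dd x = 0$, i.e. $\hat\phi_i^*\, p^*(\mu_i) = \int \phi\, p(x) p^*(\mu_i|x)\,\dd x$ with $p^*(\mu_i) = \int p(x) p^*(\mu_i|x)\,\dd x$, which is \eqref{eq:mu_star}.

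For (iii), I would note that with the association probabilities held at $p^*(\cdot\,|x)$ and the other codevectors fixed, the dependence of the auxiliary objective $F_\lambda$ on $\hat\phi_i$ is, up to an additive term independent of $\hat\phi_i$, the single Bregman term $(1-\lambda)\int p(x) p^*(\mu_i|x)\, d_{\rho_2}(\phi,\hat\phi_i)\,\dd x$ (the entropy and the cross-mode terms do not involve $\hat\phi_i$); by the Bregman centroid identity \cite{banerjee2005clustering} this is minimized uniquely at the conditional mean, which is exactly \eqref{eq:mu_star}, so \eqref{eq:mu_star} is the coordinate-wise minimizer sought by the alternating scheme. The step I expect to be the main obstacle is making the reduction to \eqref{eq:fop-plan} fully rigorous --- justifying differentiation under the integral sign and the cancellation of the $\dd p^*/\dd\hat\phi$ terms through the optimality of $p^*$ in \eqref{eq:Fstar}, as sketched in \cite{mavridis2022multi} --- together with the block bookkeeping for $\mu_i$ when $\rho$ is not assumed separable, in which case the coupling between the model-output block $z(\phi,\hat\theta_i)$ and the partition block $\hat\phi_i$ runs through the off-diagonal blocks of $\nabla^2\rho$ and must be tracked explicitly.
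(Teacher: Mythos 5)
Your proposal is correct and follows essentially the same route as the paper: starting from the stationarity condition \eqref{eq:M}, using $\frac{\dd}{\dd\mu}d_\rho(x,\mu) = -(x-\mu)^\T\nabla^2\rho(\mu)$, and exploiting invertibility of the positive-definite Hessian to arrive at the weighted-centroid formula \eqref{eq:mu_star}. The one place you go beyond the paper's ``standard algebraic manipulations'' is in making explicit that the reduction to the $\phi$-block requires $\nabla^2\rho$ to be block-diagonal across the $(\Psi,\Phi)$ coordinates (so that the relevant Hessian factor depends only on $\hat\phi_i$, not on $x$, and can be pulled out of the integral and cancelled); this separability assumption is genuinely needed and is left implicit in the paper, as is your second-order verification via the Bregman centroid identity.
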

\begin{proof}
    By definition, for a Bregman divergence $d_\rho:\Pi\times\Pi\rightarrow \mathbb R_+$ based on 
    a strictly convex function $\rho:\Pi\rightarrow \mathbb R$, it holds that 
    $\pder{d_{\rho}}{\mu}(x,\mu) = - \abra{\nabla^2 \rho(\mu),(x-\mu)}$. Similar to \cite{mavridis2023annealing}, with standard algebraic manipulations, \eqref{eq:M} then becomes
    \begin{equation}
    \int (\phi-\hat\phi_i^*) p(x) p^*(\mu_i|x) ~\dd x= 0,\ \forall i,
    \label{eq:bregman_proof_int}
    \end{equation}
    where $p^*(\mu_i|x)$ is given by  
    \eqref{eq:gibbs} and the integral is defined over the domain $\Pi$.
    Eq. \eqref{eq:bregman_proof_int} is equivalent to 
    (\ref{eq:mu_star}) since 
    $\int p(x) p^*(\mu_i|x) ~\dd x = p^*(\mu_i)$.
\end{proof}

\begin{remark}
    The partition $\cbra{\Sigma_i}$ induced by \eqref{eq:Sigma-partition}
    and a dissimilarity measure $d_\rho$ that belongs to the family of Bregman divergences,
    is separated by hyperplanes \cite{banerjee2005clustering}.
    As a result, each $\Sigma_i$ is a polyhedral region
    for a bounded domain $S$.
    \label{rmk:polyhedral}
\end{remark}


%
Based on Theorem \ref{thm:bregman_in_DA},
Theorem \ref{thm:ODA} below constructs a gradient-free stochastic approximation algorithm 
that recursively estimates (\ref{eq:mu_star}).

\begin{theorem}
The sequence $\hat\phi_i(t)$ constructed by the recursive updates
\begin{equation}
\begin{cases}
\hat\rho_i(t+1) &= \hat\rho_i(t) + \beta(t) \sbra{ \hat p_i(t) - \hat\rho_i(t)} \\
\sigma_i(t+1) &= \sigma_i(t) + \beta(t) \sbra{ \phi_t \hat p_i(t) - \sigma_i(t)},
\end{cases}
\label{eq:oda_learning1}
\end{equation}
where $x_t = [\psi_t^\T \phi_t^T]^\T$ represents external input
with $\psi_t\sim \Psi$, $\phi_t\sim \Phi$, $\sum_t \beta(t) = \infty$, $\sum_t \beta^2(t) < \infty$,
and the quantities $\hat p_i(t)$ and $\hat\phi_i(t)$ 
are recursively updated 
as follows:
\begin{equation}
\begin{aligned}
\hat\phi_i(t) = \frac{\sigma_i(t)}{\hat\rho_i(t)},\quad
\hat p_i(t) = \frac{\hat\rho_i(t) e^{-\frac{1-\lambda}{\lambda}d_\rho(x_t,\mu_i(t))}}
			{\sum_i \hat\rho_i(t) e^{-\frac{1-\lambda}{\lambda}d_\rho}(x_t,\mu_i(t))}, 
\end{aligned}
\label{eq:oda_learning2}
\end{equation}
with $\mu_i(t)=[z_i^\T(\phi_t,\hat\theta_i),\hat\phi_i(t)^\T]^\T$,
converges almost surely to $\hat\phi_i^*$ given in \eqref{eq:mu_star}.
\label{thm:ODA}
\end{theorem}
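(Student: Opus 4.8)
The plan is to recognize the recursion \eqref{eq:oda_learning1}--\eqref{eq:oda_learning2} as a stochastic approximation scheme of Robbins--Monro type and invoke the ODE method (in the style of Kushner--Clark / Borkar, as used in \cite{mavridis2023online,mavridis2023annealing}). First I would fix $\lambda\in[0,1)$ and treat $\cbra{\hat\theta_i}$ as frozen constants, so that $\mu_i(t)$ depends on $t$ only through $\hat\phi_i(t)$ and the i.i.d.\ (or stationary-ergodic) input $x_t$. I would then write the pair of updates for the sufficient statistics $(\rho_i(t),\sigma_i(t))$ in the canonical form $\Xi(t+1) = \Xi(t) + \beta(t)\bigl[h(\Xi(t)) + M(t+1)\bigr]$, where $\Xi(t) = (\rho_i(t),\sigma_i(t))_i$, the mean field is $h(\Xi) = \E{(\,\hat p_i - \rho_i,\ \phi\,\hat p_i - \sigma_i\,)_i}$ with the expectation taken over $X$ at the current parameter value, and $M(t+1)$ is the martingale-difference noise $M(t+1) = (\,\hat p_i(t) - \rho_i(t),\ \phi_t\hat p_i(t) - \sigma_i(t)\,)_i - h(\Xi(t))$, which is a valid $\cbra{\mathcal F_t}$-martingale difference because $x_t$ is independent of $\mathcal F_t$ and enters $\hat p_i(t)$ only through the current, $\mathcal F_t$-measurable, codevectors.

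The key steps, in order, are: (i) verify the standard stochastic-approximation hypotheses --- the stepsize conditions $\sum_t\beta(t)=\infty$, $\sum_t\beta^2(t)<\infty$ are assumed; boundedness of the iterates follows since $\hat p_i(t)\in[0,1]$, $\sum_i\hat p_i(t)=1$, so $\rho_i(t)$ is a convex-combination-type average staying in $[0,1]$, and $\sigma_i(t)$ stays in the convex hull of $\cbra{\phi_t}$, which is bounded by Assumption~\ref{as:PE} (bounded inputs, asymptotically bounded states); the noise $M(t+1)$ has bounded conditional second moment on these compact sets; and $h$ is Lipschitz on the relevant compact set because the Gibbs weights $\hat p_i$ are smooth in $\mu$ and $d_\rho$ is $C^2$. (ii) Conclude by the ODE method that $\Xi(t)$ tracks the limit set of $\dot\Xi = h(\Xi)$, i.e.\ $\dot\rho_i = \E{\hat p_i} - \rho_i$ and $\dot\sigma_i = \E{\phi\,\hat p_i} - \sigma_i$. (iii) Identify the equilibria: at a rest point $\rho_i = \E{\hat p_i}$ and $\sigma_i = \E{\phi\,\hat p_i}$. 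At such a point the recursively-defined quantity $\hat\phi_i = \sigma_i/\rho_i$ satisfies $\hat\phi_i = \E{\phi\,\hat p_i}/\E{\hat p_i}$, and $\hat p_i$ evaluated with $\rho_i = p^*(\mu_i)$ reduces exactly to the Gibbs posterior $p^*(\mu_i|x)$ of \eqref{eq:gibbs} (the $\rho_i$ prefactors implement the mixture weights). Hence $\hat\phi_i = \int\phi\, p(x) p^*(\mu_i|x)\dd x / p^*(\mu_i)$, which is precisely the fixed point \eqref{eq:mu_star} characterized by Theorem~\ref{thm:bregman_in_DA}. (iv) Argue convergence to this point rather than merely to the invariant set: the ODE is essentially a relaxation toward the conditional-mean map, and following \cite{mavridis2023annealing} one shows $F_\lambda^*$ is a Lyapunov function for the flow (its derivative along trajectories is nonpositive, vanishing only at the stationarity condition \eqref{eq:M}), so the limit set is contained in the set of fixed points \eqref{eq:mu_star}; almost-sure convergence to $\hat\phi_i^*$ then follows.

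The main obstacle I anticipate is step (iv) --- pinning down the limit set to the \emph{specific} point \eqref{eq:mu_star} rather than some larger invariant set, and handling the coupling between the $\rho_i$-recursion and the $\sigma_i$-recursion, which together only indirectly define $\hat\phi_i(t)$ as a ratio. One has to verify that the ``lifted'' dynamics in $(\rho,\sigma)$ are well-posed (no division by a vanishing $\rho_i$ along trajectories started from a reasonable initialization, which uses that the Gibbs weights are bounded away from $0$) and that projecting the $(\rho,\sigma)$-limit through the ratio map indeed lands on \eqref{eq:mu_star}. A secondary technical point is justifying the ODE method under merely stationary-ergodic (rather than i.i.d.) data if one wants to cover the closed-loop system \eqref{eq:sass}; this is where the Markov-noise version of the ODE method, or an averaging argument exploiting Assumption~\ref{as:PE}, would be needed. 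The rest is routine verification of Borkar-type conditions.
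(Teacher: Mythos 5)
Your proposal is correct and follows essentially the same route as the paper: the paper's proof is a one-line citation to Theorem 5 of \cite{mavridis2023annealing}, which is exactly the Robbins--Monro/ODE-method argument you reconstruct --- lifted $(\rho_i,\sigma_i)$ dynamics, martingale-difference noise, boundedness of the iterates, identification of the equilibria with the fixed point \eqref{eq:mu_star} via the ratio map, and $F_\lambda^*$ as the Lyapunov function pinning down the limit set. Your flagged caveats (non-vanishing $\rho_i$, and Markov rather than i.i.d.\ data in closed loop) are the same ones handled (or implicitly assumed) in that reference.
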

\begin{proof}
    The proof follows similar arguments as Theorem $5$ of 
    \cite{mavridis2023annealing}.
\end{proof}

\begin{remark}
Intuitively, the quantity $\hat \rho_i$ in \eqref{eq:oda_learning1} is an estimate of the probability $p(\mu_i)$. In that sense, $\sigma_i$ becomes an estimate of $\E{\mathds{1}_{\cbra{\mu}}\Phi}$, and 
$\hat\phi_i$ becomes an estimate of $\E{\Phi|\mu}$, which is a generalization of the centroid form found in clustering algorithms \cite{mavridis2023online}.

\end{remark}

\begin{remark}
    Notice that the dynamics of   
    \eqref{eq:oda_learning1} 
    can be expressed as: 
    {\small
    \begin{equation}
        \begin{aligned}
            \hat\phi_i(t+1) &= 
            \frac{\beta(t)}{\hat\rho_i(t)}\bigg[\frac{\sigma_i(t+1)}{\hat\rho_i(t+1)}
            \pbra{\hat\rho_i(t)-\hat p_i(t)}
            + \phi_t \hat p_i(t)-\sigma_i(t)\bigg],
        \end{aligned}
        \label{eq:mu-updates}
    \end{equation}
    }%
    where the recursive updates take place for every codevector $\hat\phi_i$
    sequentially. This is a discrete-time dynamical system 
    that presents bifurcation phenomena with respect to the parameter $\lambda$, 
    i.e., the number of equilibria of this system changes with respect to the
    value $\lambda$ which is hidden inside the term $\hat p_i(t)$ in 
    \eqref{eq:oda_learning2}.
    According to this phenomenon, the number of distinct values of $\hat\phi_i$ is finite, 
    and the updates need only be taken with respect to these values that we call 
    ``effective codevectors''.
    This is discussed in Section \ref{sSec:bifurcation}.
\end{remark}

\subsection{Bifurcation Phenomena}
\label{sSec:bifurcation}

In Section \ref{sSec:oda} we described how to solve the optimization 
problem for a given value of the parameter $\lambda$.
The main idea of the proposed approach is to solve 
a sequence of optimization problems of the form \eqref{eq:F} with decreasing values 
of $\lambda$.
This process then becomes a homotopy optimization method \cite{lin2023continuation}.
In particular, the usage of the entropy term resembles annealing optimization methods and grants $\lambda$ the name of a 'temperature' parameter. 
Notice that, so far, we have assumed an arbitrary number of codevectors $K$.
We will show that the unique values of the set $\cbra{\hat\phi_i}$ 
that solves \eqref{eq:F},
form a finite set of $K_\lambda$ values that we will refer to as ``effective codevectors''.
%

Notice that at high temperature ($\lambda\rightarrow 1$),~(\ref{eq:gibbs}) yields
uniform association probabilities $p(\mu_i|x)=p(\mu_j|x),\ \forall i,j, \forall x$, 
and as a result of (\ref{eq:mu_star}), all pseudo-inputs are located at the same point
$\hat\phi_i = \Ep{X}{\phi},\ \forall i$,
which means that there is one unique ``effective'' codevector given by $\Ep{X}{\phi}$.
As $\lambda$ is lowered below a critical value, a bifurcation phenomenon occurs, 
when the number of ``effective'' codevectors increases, 
which describes an annealing process \cite{mavridis2023online,rose1998deterministic}.
Mathematically, this occurs when the existing solution $\hat\phi^*$ given by (\ref{eq:mu_star}) 
is no longer the minimum of the free energy $F^*$,
as the temperature $\lambda$ crosses a critical value.
Following principles from variational calculus,
we can track the bifurcation by the condition:
\begin{equation}
    \frac{d^2}{d\epsilon^2} F^*(\cbra{\hat\phi+\epsilon \hat\psi})\bigg|_{\epsilon=0} \geq 0,
    \label{eq:soc}
\end{equation}
for all choices of finite perturbations $\cbra{\hat\psi}$.
%
Using (\ref{eq:soc}) and direct differentiation, one can show 
that bifurcation depends on the temperature coefficient $\lambda$ 
(and the choice of the Bregman divergence, through the function $\rho$)
\cite{mavridis2023annealing,mavridis2023multi}.
%
%
%
In other words, the number of codevectors increases countably many times as 
the value of $\lambda$ decreases, resulting, at the limit, in a consistent density estimator \cite{mavridis2023multi}. 
However, an algorithmic implementation needs only
as many codevectors in memory as the number of ``effective'' codevectors.

In practice.
we can detect the bifurcation points 
by introducing perturbing pairs of codevectors at each 
temperature level $\lambda$.
In this way, the codevectors $\hat\phi$ are doubled by inserting a perturbation of each $\hat\phi_i$ in 
the set of effective codevectors.
The newly inserted codevectors will merge with their pair if 
a critical temperature has not been reached and separate otherwise.
The merging criterion takes the form:
\begin{equation}
\frac{1-\lambda}{\lambda} d_\rho(\hat\phi_i,\hat\phi_j)\leq \epsilon_n,\ \forall i,j,   
\label{eq:phi_condition}
\end{equation}
for a given threshold $\epsilon_n$.
The pseudocode for this algorithm is presented in Alg. \ref{alg:ODA}.
A detailed discussion on the implementation of the 
original online deterministic annealing algorithm,
its complexity,
and the effect of its parameters,
can be found in \cite{mavridis2023online,mavridis2023annealing,mavridis2023multi}.

\subsection{Estimating the number of modes}
\label{sSec:modes}

As illustrated in Fig. \ref{fig:pwa-example}, the problem formulation developed in 
Section \ref{Sec:formulation} defines a possibly imperfect surjective mapping from $\cbra{\Sigma_j}_{j=1}^K$ to 
$\cbra{S_i}_{i=1}^s$ such that each $S_i$ is defined as a union of a subset of $\cbra{\Sigma_j}_{j=1}^K$.
In this section, we define a recursive method to automatically construct this mapping, a critical addition to the methods proposed in \cite{mavridis2024state,mavridis2023identification}.

It is worth noting that the construction of $\cbra{\Sigma_j}_{j=1}^K$ defines a consistent density estimator of the mode swithcing behavior on $S$ in the limit $\lambda\rightarrow 0$ (which induces $K\rightarrow \infty$) \cite{mavridis2023multi}. However, according to Remark \ref{rmk:polyhedral}, it is possible for this mapping to be perfect even with bounded $K$ if $P$ is a polyhedral partition and the reconstruction is ideal. Then each $S_i$ is perfectly represented, inducing zero mode switching error.
In addition, the design of an appropriate termination criterion for Alg. \ref{alg:ODA} 
is an open question and is subject to the trade-off between the number
$K$ and the minimization of the identification error.
In this work, we make use of the condition $K\leq K_{\max}$ as a termination criterion,
where $K_{\max}$ represents the computational capacity of the identification device.

Recall that each $\Sigma_j$ is associated with a parameter vector $\hat\theta_j$, $j=1,\ldots, K$.
Assuming a set $\bar\theta = \cbra{\bar \theta_i}_{i=1}^{\hat s}$, we define each $\hat\theta_j$ as the mapping:
\begin{equation}
    \hat\theta_j(\bar\theta) = \bar\theta_i,\ \text{if } i=\argmin_k d_\rho(\hat\theta_j,\bar\theta_k).
    \label{eq:theta_rule}
\end{equation}
In this way $\Sigma_j\in S_i$ if $\hat\theta_j(\bar\theta) = \bar\theta_i$.
Therefore, given \eqref{eq:theta_rule}, the goal now is to find $\hat s$ and $\bar\theta$ such that 
$\hat s = s$, and $\bar\theta_i=\theta_i$, $\forall i\in \cbra{1,\ldots,s}$.
We follow a similar approach to the bifurcation mechanism described in Section \ref{sSec:bifurcation}.
Starting with one codevector $\hat\phi_0$, we define $\bar\theta_0=\hat\theta_0$.
Every time a codevector $\hat\phi_j$ is split into a pair of perturbed codevectors, 
a new $\hat\theta_{j^\prime}$ is introduced.
After convergence for a given $\lambda$, merging of the codevectors is detected by \eqref{eq:phi_condition}.
For the insertion of a new $\bar\theta_i$ we check the condition:
\begin{equation}
d_\rho(\hat\theta_j,\bar\theta_i)> \epsilon_s,\ \forall j,
\label{eq:theta_condition}
\end{equation}
with respect to a given threshold $\epsilon_s$.
Notice that in contrast to \eqref{eq:phi_condition}, \eqref{eq:theta_condition} does not depend on $\lambda$.
If \eqref{eq:theta_condition} is satisfied, a new $\bar\theta_i$ is introduced and $\hat s \leftarrow \hat s + 1$.
This process is integrated in the mode identification algorithm and its pseudocode is presented in Alg. \ref{alg:ODA}.

\begin{remark}
Note that $\cbra{\hat\theta_j}$ are only used as functions of $\bar\theta$, 
and the parameters $\cbra{\bar\theta_i}$ are the ones that are being updated by the local system identification algorithm 
that will be presented in Section \ref{Sec:PWAID}. 
\end{remark}

 \begin{algorithm}[hb!]
 \caption{Switched System Identification}
 \label{alg:ODA}
 \begin{algorithmic}
 \STATE Set parameters and initialize $\hat\phi=\cbra{\hat\phi_0},\bar\theta=\cbra{\bar\theta_0}$
 \WHILE{$K<K_{\textrm{max}}$ \textbf{and} $\lambda>\lambda_{\textrm{min}}$}
 \STATE Perturb  
 	$\hat\phi_i \gets  
 		\cbra{\hat\phi_i+\delta, \hat\phi_i-\delta}$, $\forall i$ 
 \STATE Set $t \gets 0$
 \REPEAT 
 %
 \STATE Observe $x=(\psi,\phi)$ according to \eqref{eq:pwa-theta}   
 \STATE Update $\bar\theta_w$, $w=\argmin_j  d_\rho (\phi,\hat\phi_j)$, using \eqref{eq:sgd-updates}
 \FOR{$i = 1,\ldots, K$} 
 \STATE Update $\hat\phi$ using \eqref{eq:oda_learning1}, \eqref{eq:oda_learning2} 
 \ENDFOR
 \STATE $t\gets t+1$
 \UNTIL Convergence 
 \STATE Discard $\hat\phi_i$ if $\frac{1-\lambda}{\lambda}d_\rho(\hat\phi_j,\hat\phi_i)<\epsilon_n$, 
	$\forall i,j,i\neq j$
 \STATE Insert $\hat\theta_i$ in $\bar\theta$ if   
 	$d_\rho(\hat\theta_j,\hat\theta_i)> \epsilon_s,\ \forall j$
 \STATE Lower temperature $\lambda \gets \gamma \lambda$, $0<\gamma<1$
 \ENDWHILE
 \STATE Define $\cbra{\Sigma_i}_{i=1}^K$ using \eqref{eq:Sigma-partition}
 \STATE Define $\hat s = \text{card}(\bar\theta)$
 \STATE Define $\cbra{S_i}_{i=1}^{\hat s}$ by $\Sigma_j\in S_i$ if $\hat\theta_j(\bar\theta) = \bar\theta_i$
 \STATE Estimated Model Parameters: $\hat s$, $\cbra{S_i}_{i=1}^{\hat s}$, $\cbra{\bar\theta_i}_{i=1}^{\hat s}$
        
 \end{algorithmic}
 \end{algorithm}

\section{Piecewise Affine System Identification}
\label{Sec:PWAID}

In this section we review standard recursive system identification for 
estimating the parameters $\cbra{\bar\theta_i}$ of the local models given 
knowledge of the partition $\cbra{S_i}$.

We show that this kind of recursive identification can be formulated as a stochastic approximation algorithm, and that it can be combined using the theory of two-timescale stochastic approximation with the stochastic approximation method of Section \ref{Sec:ODA} to estimate both 
$\cbra{S_i}$ and $\cbra{\bar\theta_i}$ at the same time.

\subsection{Recursive Identification of Local Models}
\label{sSec:gradient-descent}

Recall that, given knowledge of the partition $\cbra{S_i}_{i=1}^s$, each local linear model of the PWA system in \eqref{eq:pwa-theta} is 
completely defined by the parameters $\cbra{\theta_i}$.
In the following, we develop 
a stochastic approximation recursion to estimate  $\cbra{\bar\theta_i}$.
First we define the error:
\begin{equation}
    \epsilon(t) =  \sum_i \mathds{1}_{\sbra{\phi_t\in S_i}} [\phi_t^\T \otimes I_{m}] \bar\theta_i  - \psi_t 
    \label{eq:epsilon}
\end{equation}
A stochastic gradient descent approach aims to minimize the error:
\begin{equation}
\minimize_{\bar\theta_i}~ \frac 1 2 \E{ \|\epsilon(t) \|^2 },
\label{eq:sgd-min}
\end{equation}
using the recursive updates:
\begin{equation}
\begin{aligned}
\bar\theta_i(t+1) &= \bar\theta_i(t) - \alpha(t)  \pbra{\nabla_{\bar\theta_i} \epsilon(t)} \epsilon(t) \\  
&= \bar\theta_i(t) - \alpha(t) [\phi_t^\T \otimes I_{m}]^\T \epsilon(t)
\end{aligned}
\label{eq:sgd-updates}
\end{equation}
where $\sum_n \alpha(n) = \infty$, $\sum_n \alpha^2(n) < \infty$.
Here the expectation is taken with respect to the joint distribution of $(\psi_y,\phi_t)$ as explained in Section \ref{Sec:formulation}.
This is a standard recursive identification method and constitutes a stochastic approximation sequence of the form:
\begin{equation}
    \bar\theta_i(t+1) = \bar\theta_i(t) + \alpha(t) \sbra{h_\theta(\bar\theta_i(t)) + M_\theta(t+1)},\ t \geq 0,
    \label{eq:sa-form}
\end{equation}
where $h_\theta(\bar\theta_i)=-\nabla\E{\|\epsilon(t)\|^2}$ is Lipschitz, and 
$M(t+1)=\nabla\E{\|\epsilon(t)\|^2}-\nabla \|\epsilon(t)\|^2$ is a Martingale difference
sequence.
This sequence converges almost surely to the equillibrium of
the differential equation 
\begin{equation}
    \dot {\bar{\theta_i}} = h_\theta(\bar\theta_i),\ t \geq 0.
    \label{eq:sa-ode}
\end{equation}
which can be shown to be a solution of \eqref{eq:sgd-min} 
with standard Lyapunov arguments. 
For more details the reader is referred to \cite{borkar2009stochastic,mavridis2023annealing}.
Moreover, notice that \eqref{eq:sgd-updates} is a vectorized representation of \eqref{eq:theta_recursion},
for $\gamma = \alpha(t)>0$. Therefore, under the PE condition \eqref{eq:PE-as} of Assumption \ref{as:PE}, 
and under the zero-mean noise assumption, it follows that $\bar\theta_i$ converges asymptotically to $\theta_i$
for all $i=1,\ldots,s$, i.e., the minimum of \eqref{eq:sgd-min} is achieved.

\subsection{Combined Mode and Dynamics Identification}
\label{sSec:combined}

Recall that the  
mode identification method is based on
the stochastic approximation updates 
\eqref{eq:oda_learning1}
that can be written with respect to the vectors $\xi_i(t) = [\hat\rho_i^\T(t) \sigma_i^\T(t)]^\T$ and a stepsize schedule $\beta(t)$ in the form:
\begin{equation}
    \xi_i(t+1) = \xi_i(t) + \beta(t) \sbra{h_\phi\pbra{\xi(t),\bar\theta(t)} + M_\phi(t+1)},\ t \geq 0,
    \label{eq:sa-phi}
\end{equation}
where $h_\phi$ is Lipschitz, $M_\phi(t)$ is a Martingale difference sequence and the dependence on $\bar\theta$ comes from the quantity $\hat p_i(t)$ in \eqref{eq:oda_learning2}
given \eqref{eq:theta_rule}.
At the same time, the recursive system identification technique to 
estimate $\bar\theta$ is a stochastic approximation 
sequence with a stepsize schedule $\alpha(t)$ of the form:
\begin{equation}
    \bar\theta_i(t+1) = \bar\theta_i(t) + \alpha(t) \sbra{h_\theta\pbra{\xi(t),\bar\theta(t)} + M_\theta(t+1)},\ t \geq 0,
    \label{eq:sa-theta}
\end{equation}
as given in \eqref{eq:sa-form}.
The dependence on $\xi$, comes through 
\eqref{eq:epsilon}, since $\xi$ defines $\hat\phi$,
which defines 
$\cbra{\Sigma_i}_{i=1}^{K}$ through \eqref{eq:Sigma-partition}, which defines 
$\cbra{S_i}_{i=1}^{\hat s}$ through the rule 
$\Sigma_j\in S_i$ if $\hat\theta_j(\bar\theta) = \bar\theta_i$.

Theorem \ref{thm:two_timescales} 
shows how the two recursive algorithms 
\eqref{eq:sa-phi} and \eqref{eq:sa-theta}
can be combined using the theory of two-timescale stochastic approximation
if $\nicefrac{\beta(t)}{\alpha(t)}\rightarrow 0$, i.e., 
when the estimation of the partition $\cbra{\Sigma_i}_{i=1}^K$ is updated at a slower 
rate than the updates of the parameters $\cbra{\bar\theta_i}_{i=1}^{\hat s}$.
\begin{theorem}
Consider the sequence $\cbra{\xi(t)}_{t\in\mathbb Z_+}$ generated using the updates 
\eqref{eq:sa-phi}, 
where $\xi_i(t) = [\hat\rho_i^\T(t) \sigma_i^\T(t)]^\T$, and $(\hat\rho_i, \sigma_i)$ are defined in \eqref{eq:oda_learning1}.
Consider the sequence $\cbra{\bar\theta(t)}_{t\in\mathbb Z_+}$
generated by the updates \eqref{eq:sa-theta}.
Let the stepsizes $(\alpha(t),\beta(t))$ of \eqref{eq:sa-theta} and \eqref{eq:sa-phi}, respectively, satisfy the conditions
$\sum_n \alpha(n) = \sum_n \beta(n) = \infty$, 
$\sum_n ( \alpha^2(n)+\beta^2(n) ) <\infty$, and 
$\nicefrac{\beta(n)}{\alpha(n)}\rightarrow 0$,
%
%
with the last condition implying that the iterations for $\cbra{\xi(t)}$
run on a slower timescale than those for $\cbra{\bar\theta(t)}$. 
If the equation 
\begin{equation}
\dot{\bar\theta}(t) = h_\theta(\xi,\bar\theta(t)),\ \bar\theta(0)=\bar\theta_0,
\label{eq:ode-theta}
\end{equation}
has an asymptotically stable equilibrium $\lambda(\xi)$ 
for fixed $\xi$ and some Lipschitz mapping $\lambda$, and the equation 
\begin{equation}
\dot \xi(t) = h_\phi(\xi(t),\lambda(\xi(t))),\ \xi(0)=\xi_0,
\label{eq:ode-xi}
\end{equation}
has an asymptotically stable equilibrium $\xi^*$, 
then, almost surely, the sequence $(\xi(t),\bar\theta(t))$
generated by \eqref{eq:sa-phi}, \eqref{eq:sa-theta}, converges to $(\xi^*,\lambda(\xi^*))$.
\label{thm:two_timescales}
\end{theorem}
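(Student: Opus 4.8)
The plan is to reduce the claim to the standard two-timescale stochastic approximation theorem (e.g.\ \cite{borkar2009stochastic}, Ch.~6), so that the substance of the proof is the verification of its hypotheses for the recursions \eqref{eq:sa-phi}, \eqref{eq:sa-theta}, after which the conclusion is immediate.

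First I would assemble the structural ingredients. The drift $h_\theta$ is Lipschitz by \eqref{eq:sa-form}, and $h_\phi$ is Lipschitz because the Gibbs weights $\hat p_i(t)$ in \eqref{eq:oda_learning2} depend smoothly on $(\xi,\bar\theta)$ while the averaging maps in \eqref{eq:oda_learning1} are affine. The terms $M_\theta(t+1)$, $M_\phi(t+1)$ are $\mathcal F_t$-martingale differences, and I would verify the second-moment bound $\E{\norm{M_\bullet(t+1)}^2\mid\mathcal F_t}\le C\pbra{1+\norm{\xi(t)}^2+\norm{\bar\theta(t)}^2}$ from boundedness of $\phi_t$ and the observations together with the linear-in-$\bar\theta$ form of $\epsilon(t)$ in \eqref{eq:epsilon}. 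I also need a.s.\ boundedness of the iterates: for $\cbra{\xi(t)}$ this follows because $\rho_i(t)$, $\sigma_i(t)$ are exponentially weighted averages of bounded quantities, so $\hat\phi_i(t)$ stays in the convex hull of the observed $\phi_t$; for $\cbra{\bar\theta(t)}$ it follows from the PE condition \eqref{eq:PE-as} of Assumption~\ref{as:PE}, which renders \eqref{eq:ode-theta} globally exponentially stable, or can be enforced by projection onto a large compact set.

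Next comes the timescale-separation argument. Writing \eqref{eq:sa-phi} as $\xi_i(t+1)=\xi_i(t)+\alpha(t)\sbra{\tfrac{\beta(t)}{\alpha(t)}\pbra{h_\phi(\xi(t),\bar\theta(t))+M_\phi(t+1)}}$ and using $\beta(t)/\alpha(t)\to0$, the slow iterate has asymptotically vanishing increments on the fast ($\alpha$) timescale, so the pair $(\bar\theta(t),\xi(t))$ tracks the family of ODEs $\dot{\bar\theta}=h_\theta(\xi,\bar\theta)$, $\dot\xi=0$; since $\bar\theta\mapsto h_\theta(\xi,\bar\theta)$ has the asymptotically stable equilibrium $\lambda(\xi)$ for each fixed $\xi$, the first-phase conclusion gives $\norm{\bar\theta(t)-\lambda(\xi(t))}\to0$ a.s. Feeding this back into \eqref{eq:sa-phi}, $\cbra{\xi(t)}$ becomes an asymptotic pseudo-trajectory of $\dot\xi=h_\phi(\xi,\lambda(\xi))$, i.e.\ of \eqref{eq:ode-xi}; the hypothesis that $\xi^*$ is its asymptotically stable equilibrium yields $\xi(t)\to\xi^*$ a.s., and continuity of $\lambda$ then gives $\bar\theta(t)\to\lambda(\xi^*)$, which is the claim.

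The main obstacle is not the abstract machinery but the verification that $\dot{\bar\theta}=h_\theta(\xi,\cdot)$ has a \emph{unique} asymptotically stable equilibrium $\lambda(\xi)$ depending on $\xi$ in a Lipschitz way, together with keeping the iterates inside its basin of attraction. Here one must exploit the PE condition of Assumption~\ref{as:PE} and the precise way $\xi$ enters the effective partition $\cbra{S_i}$ (through \eqref{eq:Sigma-partition} and the rule $\Sigma_j\in S_i$ if $\hat\theta_j(\bar\theta)=\bar\theta_i$); a subtlety is that this dependence is only piecewise smooth because of the $\argmin$ in \eqref{eq:theta_rule}, so one argues on the regions where the assignment is locally constant. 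A secondary point is to check that the threshold-based merging and $\bar\theta$-insertion steps of Alg.~\ref{alg:ODA} do not disturb the limiting analysis within each fixed-$\lambda$ phase, which is handled by treating these phases separately as in \cite{mavridis2023annealing}.
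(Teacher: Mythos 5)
Your proposal is correct and follows essentially the same route as the paper, which simply invokes the standard two-timescale stochastic approximation result (Theorem 2, Ch.~6 of \cite{borkar2009stochastic}) and stops there. Your additional verification of the Lipschitz, martingale-difference, and boundedness hypotheses, and your flagging of the piecewise-smooth dependence through \eqref{eq:theta_rule} and the fixed-$\lambda$ phase structure, goes beyond what the paper records and is a sensible elaboration rather than a different method.
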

\begin{proof}
    It follows directly from Theorem 2, Ch. 6 of \cite{borkar2009stochastic}.
\end{proof}

\begin{corollary}
    Condition \eqref{eq:ode-theta} of Theorem \ref{thm:two_timescales} is satisfied by the definition of $h_\theta$ in \eqref{eq:sa-ode}. Therefore,
    \eqref{eq:ode-xi} implies the convergence of $\hat\phi$ through \eqref{eq:oda_learning2}, and of the partition $\cbra{\Sigma_i}$ through 
    \eqref{eq:Sigma-partition}.
\end{corollary}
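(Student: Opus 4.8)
The plan is to read the Corollary as a direct consequence of Theorem~\ref{thm:two_timescales}: once the hypothesis on the inner ODE \eqref{eq:ode-theta} is verified, the theorem delivers the joint limit $(\xi(t),\bar\theta(t))\to(\xi^*,\lambda(\xi^*))$, and the two asserted convergences for $\hat\phi$ and $\cbra{\Sigma_i}$ follow by pushing this limit through the continuous maps \eqref{eq:oda_learning2} and \eqref{eq:Sigma-partition}. I would organize the argument in three steps: verify that \eqref{eq:ode-theta} has an asymptotically stable equilibrium; establish that this equilibrium depends Lipschitz-continuously on $\xi$; and transport the conclusion of Theorem~\ref{thm:two_timescales} to the quantities of interest.

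First I would verify \eqref{eq:ode-theta}, which coincides with \eqref{eq:sa-ode}, for each fixed $\xi$. Fixing $\xi$ fixes the prototype locations $\hat\phi=\sigma/\rho$ and hence, through \eqref{eq:Sigma-partition} together with the assignment rule $\Sigma_j\in S_i$, the partition $\cbra{S_i}$. With the partition frozen, the cost \eqref{eq:sgd-min} decouples over the regions and is a convex quadratic in each $\bar\theta_i$, with Hessian $\E{\mathds 1_{\sbra{\Phi\in S_i}}(\Phi\Phi^\T\otimes I_m)}$. Under the persistence-of-excitation condition \eqref{eq:PE-as} of Assumption~\ref{as:PE} this Hessian is positive definite, so the cost is strictly convex with a unique minimizer $\lambda(\xi)$. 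Since $h_\theta=-\nabla\E{\norm{\epsilon(t)}^2}$, the ODE is the gradient flow of this strictly convex cost, and using the cost itself as a Lyapunov function shows $\lambda(\xi)$ is globally asymptotically stable --- exactly the content noted after \eqref{eq:sa-ode} and guaranteed by Theorem~\ref{thm:identification}.

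The step I expect to be the main obstacle is showing that $\xi\mapsto\lambda(\xi)$ is Lipschitz, as Theorem~\ref{thm:two_timescales} requires. The minimizer is given in closed form by the normal equations, so $\lambda(\xi)$ is built from the second moment $\E{\mathds 1_{\sbra{\Phi\in S_i}}\Phi\Phi^\T}$ and the cross-moment $\E{\mathds 1_{\sbra{\Phi\in S_i}}\Phi\Psi^\T}$. The dependence on $\xi$ enters only through the indicator of membership in $S_i$, i.e. through the Voronoi boundaries determined by $\hat\phi(\xi)$. Although these indicators jump pointwise, the moments are integrals against the distribution of $\Phi$; provided that distribution is atomless on the cell boundaries and $\rho_i$ stays bounded away from zero on a neighborhood of $\xi^*$ (so the Hessian remains uniformly positive definite), both moments depend Lipschitz-continuously on $\xi$, and inversion of a uniformly positive-definite matrix is itself Lipschitz. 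Composing these maps yields the required Lipschitz mapping $\lambda$.

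With the hypothesis of Theorem~\ref{thm:two_timescales} in place and \eqref{eq:ode-xi} assumed to possess an asymptotically stable equilibrium $\xi^*$, the theorem gives $(\xi(t),\bar\theta(t))\to(\xi^*,\lambda(\xi^*))$ almost surely; in particular $\xi(t)\to\xi^*$. Since $\rho_i^*=p^*(\mu_i)>0$, the map $\xi_i=[\rho_i^\T\,\sigma_i^\T]^\T\mapsto\hat\phi_i=\sigma_i/\rho_i$ of \eqref{eq:oda_learning2} is continuous at $\xi^*$, so $\hat\phi_i(t)\to\hat\phi_i^*$ almost surely, and by Theorem~\ref{thm:ODA} this limit is the solution \eqref{eq:mu_star}. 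Finally, the Voronoi rule \eqref{eq:Sigma-partition} is continuous in the prototypes $\cbra{\hat\phi_j}$ off the measure-zero cell boundaries, so $\hat\phi(t)\to\hat\phi^*$ carries over to convergence of the partition $\cbra{\Sigma_i}$, which is the remaining claim.
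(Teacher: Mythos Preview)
The paper provides no proof for this corollary; it is stated as an immediate consequence of Theorem~\ref{thm:two_timescales} and the surrounding discussion, without further argument. Your write-up supplies exactly the details the paper leaves implicit, and the three-step structure (verify the inner ODE hypothesis, check the Lipschitz dependence of its equilibrium, push the limit through the continuous maps) is the natural way to substantiate the claim. In that sense you go well beyond the paper.

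One technical point worth tightening in your Step~2: you argue that the region moments depend Lipschitz-continuously on $\xi$ ``provided that distribution is atomless on the cell boundaries.'' Atomlessness gives continuity of the map $\xi\mapsto\E{\mathds 1_{\sbra{\Phi\in S_i}}\Phi\Phi^\T}$, but not Lipschitz continuity in general; for that you need the law of $\Phi$ to have a bounded density in a neighborhood of the Voronoi boundaries, so that a perturbation of the prototypes by $\delta$ moves at most $O(\delta)$ mass across cells. With that strengthening (and your uniform positive-definiteness of the Hessian to control the matrix inverse), the composition is indeed Lipschitz. The paper does not address this at all, so your version is already more careful than the original.
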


Notice that the condition 
$\nicefrac{\beta(t)}{\alpha(t)}\rightarrow 0$ is of great importance.
Intuitively, the stochastic approximation algorithm 
\eqref{eq:sa-phi}, \eqref{eq:sa-theta} 
consists of two components running in different timescales, where
the slow component is viewed as quasi-static 
when analyzing the behavior of the fast transient.
In practice, the condition $\nicefrac{\beta(t)}{\alpha(t)}\rightarrow 0$
is satisfied by stepsizes of the form 
$(\alpha(t),\beta(t))=(\nicefrac 1 t, \nicefrac{1}{(1+t \log t)})$, or
$(\alpha(t),\beta(t))=(\nicefrac{1}{t^{\nicefrac 2 3}}, \nicefrac{1}{t})$.
Another way of achieving the two-timescale effect is to 
run the iterations for the slow component 
with stepsizes $\cbra{\alpha_{t(k)}}$, 
where $t(k)$ is a subsequence of $t$ that becomes increasingly rare
(i.e. $t(k+1)-t(k)\rightarrow\infty$), 
while keeping its values constant between these instants.
A good policy is to combine both approaches and 
update the slow component with slower stepsize schedule $\beta(t)$ 
along a subsequence keeping its values constant in between
(e.g., \cite{mavridis2023annealing,borkar2009stochastic}).

\section{General Switched System Identification}
\label{Sec:SwitchedID}

In Sections \ref{Sec:formulation}, \ref{Sec:ODA}, and 
\ref{Sec:PWAID} we have developed a real-time idenitification method for 
PWA systems. 
However, neither the proposed methodology, nor the algorithmic implementation 
of Alg. \ref{alg:ODA} are constrained to PWA systems. 
Thus the proposed approach can, in principle, be applied to more general 
switching and hybrid systems. 
However, issues may arise with respect to 
the identifiability conditions, the mode-switching estimation error, and
the possibly non-linear local system identification error.
In this section, we discuss the applicability of the proposed approach in 
different cases often encountered in switching control systems, namely switched linear systems with non-polyhedral partition, and switched non-linear systems with polyhedral partition.

\subsection{Switched linear systems with non-polyhedral partition.}

In the case of linear local dynamics, 
the recursive identification method discussed in Section \ref{sSec:gradient-descent} 
remains unchanged, and the same convergence results hold.
However, 
if the regions $S_i$ of the mode switching partition $\cbra{S_i}_{i=1}^s$ are non-polyhedral,
they cannot be perfectly approximated by a finite union of polyhedral regions $\cbra{\Sigma_i}_{i=1}^K$.
%
%
It is worth pointing out that from the convergence results of the online deterministic annealing 
algorithm \cite{mavridis2023multi}, it follows that the partition error can be arbitrarily 
small in the limit $K\rightarrow \infty$ (which is the case when $\lambda\rightarrow 0$).
Albeit a nice analytical result, in practice there will always be non-zero error in the estimation 
of the partition $\cbra{S_i}_{i=1}^s$.
We hereby discuss two ways to deal with this problem. 
The first is to assume the existence of a non-linear transformation that maps each $S_i$
to a polyhedral region $\bar S_i$, and proceed with Alg. \ref{alg:ODA}. 
General-purpose learning machines, 
such as artificial neural networks can be incorporated in this process.
Further assumptions and analysis is required for this method, which is beyond the scope of this paper.
The second refers to mitigating the jumping effect of the identified system to decrease the 
closed-loop error that naturally occurs due to imperfect mode switching. 
To this end, recall that, given an observation $\phi_t$ the dynamics of the identified model
are given according to \eqref{eq:pwa-theta} by:
\begin{equation}
    \hat \psi_t = [\phi_t^\T \otimes I_{m}] \bar\theta_i, \text{ if } \phi_t\in \Sigma_j 
    \text{ and } \hat\theta_j(\bar\theta) = \bar\theta_i.
\end{equation}
To mitigate the jumping behavior one can make use of the association probabilities 
\begin{equation}
p(\phi_i|\phi_t) = \frac{e^{-\frac{1-\lambda}{\lambda}d_\rho(\phi_t,\phi_i)}}
			{\sum_j e^{-\frac{1-\lambda}{\lambda}d_\rho(\phi_t,\phi_i)}},
\end{equation}
to instead construct the weighted dynamics:
\begin{equation}
    \hat \psi_t = \sum_{i=1}^K p^*(\phi_i|\phi_t) [\phi_t^\T \otimes I_{m}] \hat\theta_i.
\end{equation}
This jump-mitigation method has been used in the literature 
to preserve smoothness of the closed-loop dynamics and is particularly useful when 
hybrid system identification is used for non-linear function approximation, i.e., 
when the original system is not hybrid but is to be approximated by a hybrid system 
with simpler local dynamics.

\subsection{Switched non-linear systems with polyhedral partition.}

In this case, often referred to as 
piece-wise non-linear hybrid systems \cite{lauer2008switched},
the mode switching partition $\cbra{S_i}_{i=1}^s$ is polyhedral,
and can be perfectly approximated by a finite union of polyhedral regions 
$\cbra{\Sigma_i}_{i=1}^K$.
For the identification of the non-linear local dynamics, 
the recursive identification method discussed in Section \ref{sSec:gradient-descent} 
needs to be modified.
In particular the recursive updates:
\begin{equation}
\begin{aligned}
\bar\theta_i(t+1) &= \bar\theta_i(t) - \alpha(t)  \pbra{\nabla_{\bar\theta_i} \epsilon(t)} \epsilon(t),
\end{aligned}
\end{equation}
given in \eqref{eq:sgd-updates} of the same stochastic 
gradient descent structure are used, with the error term in this case given by 
\begin{equation}
    \epsilon(t) =  \sum_i \mathds{1}_{\sbra{\phi_t\in S_i}} \hat f(\phi_t,\bar\theta_i) - \psi_t, 
\end{equation}
where the functions $\hat f(\phi_t,\bar\theta_i)$ are local parametric models of known form, 
differentiable with respect to the parameters $\theta_i$.
General-purpose learning machines, 
such as artificial neural networks can be used.
Notice that the identification updates remain stochastic approximation updates of the same form, 
which means that the convergence results of Theorem \ref{thm:two_timescales} continue to hold.



\section{Experimental Results}
\label{Sec:Results}

We illustrate the properties and evaluate the performance 
of the proposed algorithm 
%
%
in multiple PWA systems, both in PWARX and state-space form.

\subsection{Benchmark PWARX System}

The benchmark PWARX system, adopted from \cite{ferrari2003clustering},  is given in the input--output representation of \eqref{eq:exp1-system}:
\begin{equation}
    \begin{aligned}
        y_t = \begin{cases}
            \theta_1^\T \phi_t + e_t,\quad \text{if } r_t\in P_1 \\
            \theta_2^\T \phi_t + e_t,\quad \text{if } r_t\in P_2 \\
            \theta_3^\T \phi_t + e_t,\quad \text{if } r_t\in P_3 \\
              \end{cases},
    \end{aligned}
    \label{eq:exp1-system}
\end{equation}
where $y_t\in\mathbb R^1$, $r_t\in P=[-4,4]$, $\phi_t=[r_t\ 1]^\T$,
$(P_1,P_2,P_3) = ( [-4,-1], (-1,2), [2,4] )$, and 
$(\theta_1,\theta_2,\theta_3) = ( [1,2]^\T, [-1,0]^\T, [1,2]^\T )$.
The simplicity of this example enables the graphical representation of the mode-switching partition
and the convergence of the model parameters.
At the same time, \eqref{eq:exp1-system} presents a jump at $r_t=2$,
and same dynamics for different regions of the input space, 
i.e., $\theta_1=\theta_3$ while $P_1\neq P_3$.
It can thus be written in the form:
\begin{equation}
    \begin{aligned}
        y_t = \begin{cases}
            \theta_2^\T \phi_t + e_t,\quad \text{if } \phi_t\in S_2 \\
            \theta_1^\T \phi_t + e_t,\quad \text{otherwise}
              \end{cases},
    \end{aligned}
    \label{eq:exp1-system-modes}
\end{equation}
where $S_2=\cbra{\phi=[r\ 1]^\T:r\in P_2}$.
%
A total of $N=150$ observations under Gaussian noise ($e_t\sim \mathcal N(0,0.2)$) 
are accessible sequentially.

%
%

Algorithm \ref{alg:ODA} is applied to the observations for $T=900$ iterations.
%
%
The temperature parameters used for the online deterministic annealing algorithm are 
$(\lambda_{\max},\lambda_{\min},\gamma)=(0.99,0.2,0.8)$, and the stepsizes 
$(\alpha(t),\beta(t))=(\nicefrac{1}{(1+0.01t)}, \nicefrac{1}{(1+0.9t \log t)})$.
In addition, $\delta=0.1$, $\epsilon_n = 1.0$, and $\epsilon_s = 2.0$.
At first ($\lambda = \lambda_{\max}$), the algorithm keeps in memory only 
one codevector $\hat\phi_1$ and one model parameter vector $\bar\theta_1$, 
essentially assuming that the system has constant dynamics 
in the entire domain, i.e., $\hat S_1=\Sigma_1=\cbra{\phi=[r\ 1]^\T:r\in P}$.
As new input--output pairs are observed, the estimated parameter $\bar\theta_1$ gets 
updated by the iterations \eqref{eq:sgd-updates}.
We have assumed $\bar\theta_1(0)=[1,1]^\T$.

At the same time, the estimate of $\bar\theta_1$ are used to update the location 
of the codevector towards the mean of the observation domain as shown in \eqref{eq:mu_star}. 
%
%
As $\lambda$ is reduced, the bifurcation phenomenon described in Section \ref{sSec:bifurcation} takes place, and, after reaching a critical value, the single
codevector splits into two duplicates. 
Now the algorithm assumes that there are two modes in the system and estimates the optimal
model parameters $\cbra{\bar\theta_1,\bar\theta_2}$ and partition $\cbra{\Sigma_1,\Sigma_2}$
(through the location of the codevectors $\cbra{\hat\phi_1,\hat\phi_2}$).
This process continues until a desired termination criterion is reached. In this case it is the minimum temperature parameter $\lambda_{\min}$ that reflects to a 
potential time and computational constraint of the system.
The bifurcation phenomenon is illustrated in Fig. \ref{fig:bifurcation} where the 
locations of the codevectors $\cbra{\hat\phi_i}$, $\hat\phi_i\in P=[-4,4]$ generated by 
Alg. \ref{alg:ODA} are shown.
The algorithm progressively constructs a total of $K=5$ effective codevectors.
The number of modes is estimated with the process explained in 
Section \ref{sSec:modes}.
Two modes are estimated
with
$\bar\theta=\cbra{\bar\theta_1,\bar\theta_2}$. 
The association of each effective codevector with each identified mode according to the rule \eqref{eq:theta_rule} is shown in Fig. \ref{fig:bifurcation}.
\begin{figure}[t]
\centering
\includegraphics[trim=20 20 0 0,clip,width=0.45\textwidth]{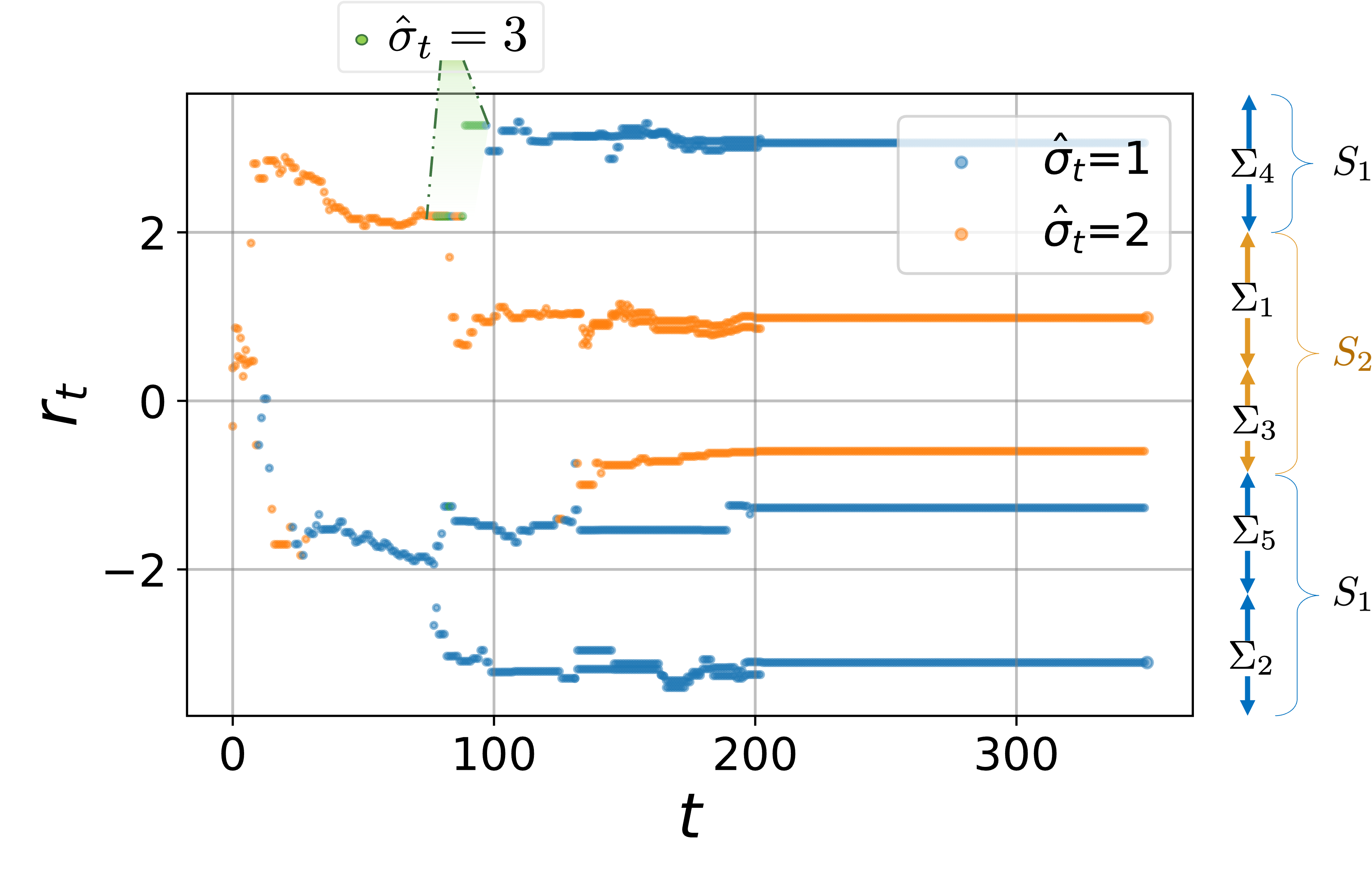}
\caption{Evolution of the codevectors $\cbra{\hat\phi_i}$ generated by 
Alg. \ref{alg:ODA} for system \eqref{eq:exp1-system-modes} illustrating the bifurcation phenomenon described in Section \ref{sSec:bifurcation}. 
The association of each effective codevector
with each identified mode according to the rule \eqref{eq:theta_rule} is also shown. Notice that a third mode is detected and quickly discarded as explained in Section \ref{sSec:modes}}.
\label{fig:bifurcation}
\end{figure}
\begin{figure}[t]
\centering
%
\includegraphics[trim=10 10 0 0,clip,width=0.45\textwidth]{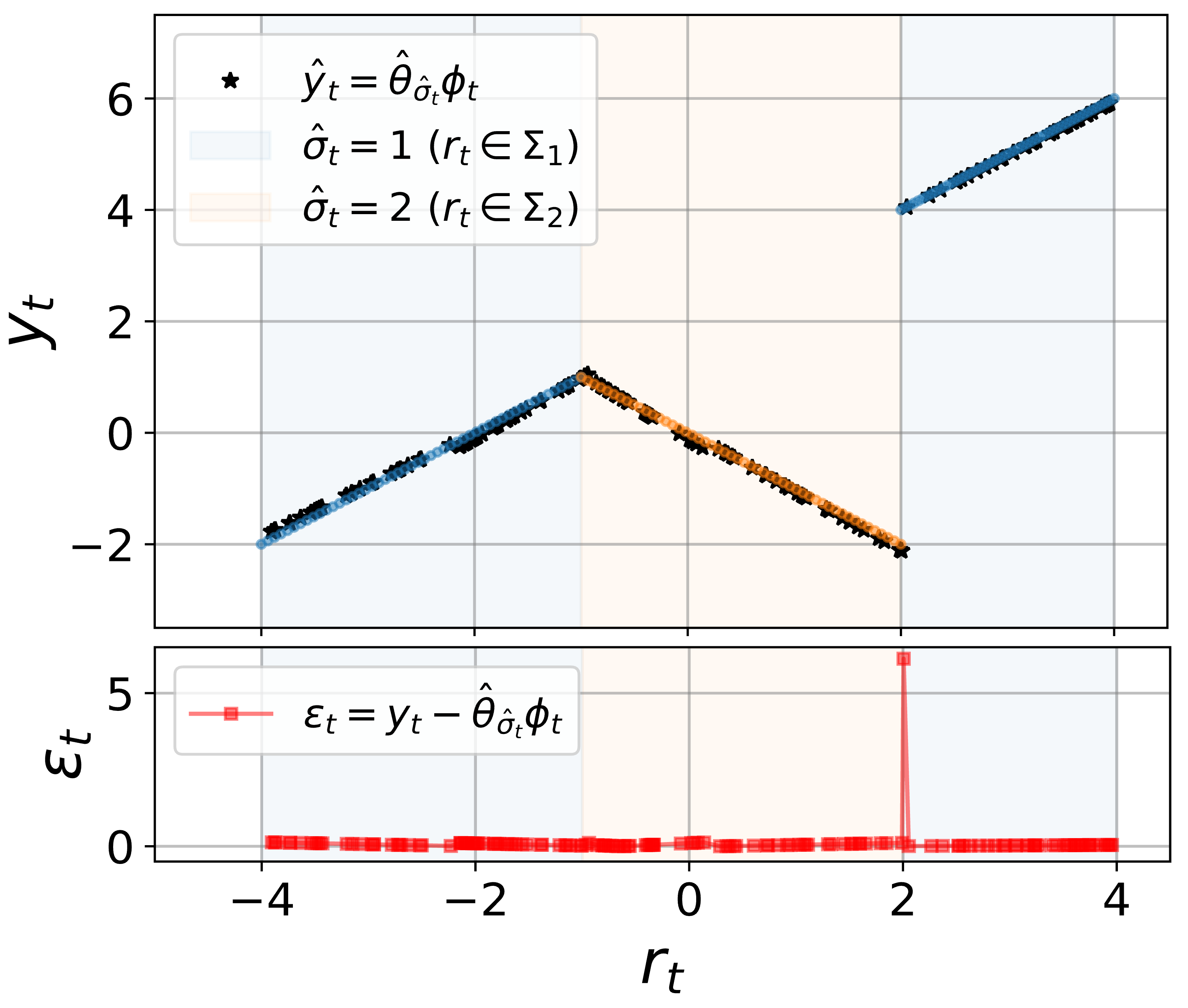}
%
\caption{Identified modes, predicted output and identification error with 
respect to the true model \eqref{eq:exp1-system-modes}. 
}
\label{fig:prediction}
\vspace{-1em}
\end{figure}
The final estimated partition, the output of the estimated model, and its error with 
respect to the true model without noise are shown in Fig. \ref{fig:prediction}. 
A single 
misclassification instance of the mode at the boundary of the true partition of the 
input--output domain is observed.
This mode switching error can be avoided by allowing $\lambda$ to go lower, which results in a larger number $K$ of effective codevectors and is indicative of the performance/complexity trade-off of the algorithm.
%
%
%
Finally, regarding the effect of the noise variance on the identification error,
for $e_t\sim \mathcal N(0,0.2)$, 
$e_t\sim \mathcal N(0,0.5)$ and $e_t\sim \mathcal N(0,0.7)$, the root-mean-square deviation across the observed samples was computed as $e_{RMS} = 0.504$, $e_{RMS} = 0.642$, and $e_{RMS} = 0.689$, respectively.

%
%


\subsection{Comparison with existing methods}

Compared to the clustering-based method in \cite{ferrari2003clustering}, the proposed algorithm applied to system \eqref{eq:exp1-system-modes} shows similar performance while maintaining several advantages. First, the number of modes is not assumed to be known a priori.
Second, the proposed identification method can operate in real-time, i.e., using one forward pass of online observations as opposed to iterating multiple times through a dataset acquired offline. 
Finally, the progressive nature of the algorithm allows for the exploitation of the performance/complexity trade-off in applications where communication or computational resources are limited.

The same advantages can be observed against more recent methods as well.
Consider the following system: 
%
\begin{equation}
    \begin{aligned}
        y_t = \begin{cases}
            \theta_1^\T \phi_t + e_t,\quad \text{if } \phi_t\in S_1 \\
            \theta_2^\T \phi_t + e_t,\quad \text{otherwise}
              \end{cases},
    \end{aligned}
    \label{eq:expB-system-modes}
\end{equation}
where $u_t, y_t\in\mathbb R$, $\phi_t = [y_{t-1}, y_{t-2}, u_{t-1}, u_{t-2}, 1]\in\mathbb R^5$, 
$\theta_1=[0.1,0.5,-0.4,0.3,0]^\T$,
$\theta_2=[0.2,0.4,0.1,0.4,0]^\T$,
and 
$S_1=\cbra{\phi\in\mathbb R^5: [1,0.5,-0.3,2,0.2]^\T\phi \geq 0}$.
Also define a ``best fit rate'' objective $b_f$ as:
%
\begin{equation}
    b_f = 1-\sqrt{\frac{\sum_t\|y_t-\hat y_t\|^2}{\sum_t\|y_t- \bar y\|^2}},
\end{equation}
where $\bar y$ represents the numerical mean value of $\cbra{y_t}_{t\geq0}$.
In this system, simulated for $t\in[0,T],\ T=10000$, the method 
proposed in \cite{bemporad2018fitting} achieves $b_f^1 = 0.6568$ in $\tau_1 = 52.28$ seconds \cite{bemporad2018fitting}. The proposed method achieves $b_f^2 = 0.7792$, constructing $K=6$ codevectors $\hat s=2$ modes with parameter vectors 
$\bar \theta_1=[0.07,0.48,-0.40,0.29,0.00]^\T$ and
$\bar \theta_2=[0.13,0.43,-0.03,0.58,0.01]^\T$.
In the same desktop machine, the forward loop of the system 
including the identification computation overhead of the proposed algorithm, lasted a total of $\tau_2 = 16.13$ seconds. This allows real-time operation for systems of the form \eqref{eq:expB-system-modes} sampled at frequency $f_s=620$ Hz or lower, i.e., with sampling period $T_s=0.0016$ sec or higher. 
Actual and predicted trajectories for system \eqref{eq:expB-system-modes} for the first $T=100$ timesteps are depicted in Fig. \ref{fig:prediction_bemporad}. Mode changes are depicted as background color.

\begin{figure}[h]
\centering
\includegraphics[trim=0 46 0 0,clip,width=0.48\textwidth]{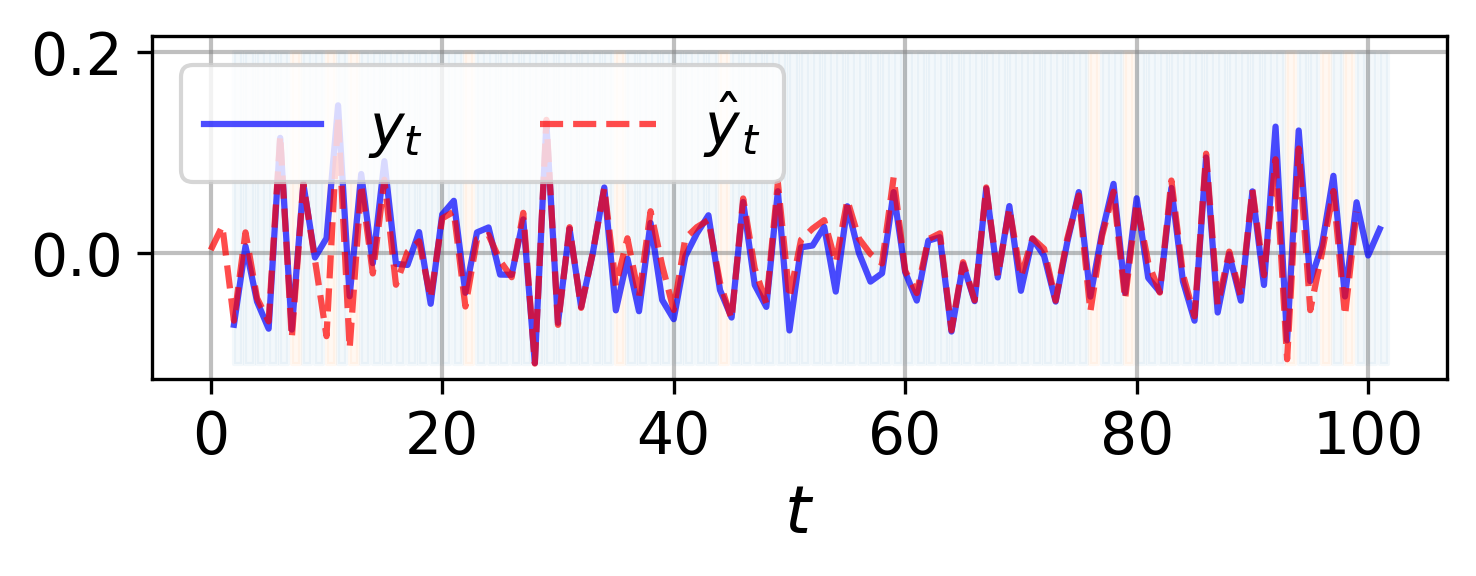}
\includegraphics[trim=0 10 0 0,clip,width=0.48\textwidth]{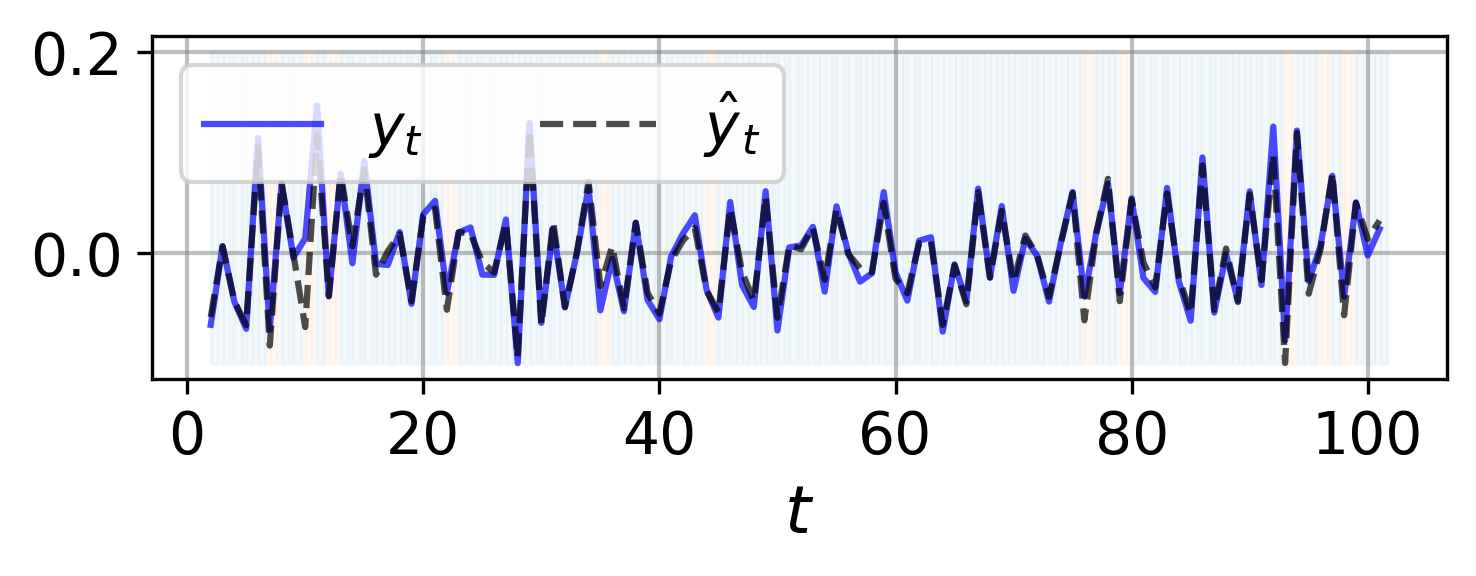}
\caption{Actual and predicted trajectories for system \eqref{eq:expB-system-modes} using the proposed method (black) and the method in \cite{bemporad2018fitting} (red).
}
\label{fig:prediction_bemporad}
\end{figure}

\subsection{State-Space PWA System}

As a simple state-space PWA example, consider following linearized PWA dynamics in the state-space domain:
\begin{equation}
    \begin{aligned}
        \begin{cases}
            x_{t+1} &= (I_{2} + \textrm{d}t 
            \begin{bmatrix} 0 & 1 \\ 
                            0 & 0 
            \end{bmatrix}) x_t + \textrm{d}t
            \begin{bmatrix} 0 \\ 1 \end{bmatrix} u_t + e_t,\ |u_t|> 1 \\
            x_{t+1} &= (I_{2} + \textrm{d}t 
            \begin{bmatrix} 0 & 1 \\ 
                            0 & -1 
            \end{bmatrix}) x_t + \textrm{d}t
            \begin{bmatrix} 0 \\ 0 \end{bmatrix} u_t + e_t,\ |u_t|\leq 1
        \end{cases},
    \end{aligned}
    \label{eq:exp2-system}
\end{equation}
where $x_t\in\mathbb R^2$, $u_t\in \mathbb R$, and $e_t\sim \mathcal N(0,0.5)$.
System \eqref{eq:exp2-system} has two modes ($s=2$)  
and the switching signal is defined by the polyhedral regions 
$R_1=\cbra{[x^\T|u^T]^\T\in \mathbb R^3:u<-1}$,
$R_2=\cbra{[x^\T|u^T]^\T\in \mathbb R^3:-1<u<1}$, and
$R_3=\cbra{[x^\T|u^T]^\T\in \mathbb R^3: 1<u}$
with $S_1 = R_1\bigcup R_3$ and $S_2 = R_2$.
The dynamics of \eqref{eq:exp2-system} consist of a controllable double integrator
when the input is of sufficient magnitude, and a stable autonomous system, otherwise.
%
%
In this example, the linear system of the second mode ($s=2$) is not minimal, 
and its identification relies on the mode switching behavior of the system, 
as explained in Section \ref{sSec:pwa-identifiability}.
%
%
%
To preserve the PE conditions of Assumption \ref{as:PE}, 
the input signal is chosen as $u_t = 2\cos(2\pi t* \textrm{d}t)$, $t\in\mathbb Z_+$,
and the noise term $w_t$ is a zero-mean Gaussian random variable with $\sigma^2=0.1$.
%
%

%
%

The system is allowed to run for $T=3s$ (seconds), with $\textrm{d}t=0.01$, i.e.,
a total of $N=300$ observations are acquired online, 
based on which, the proposed method identifies the switched system in real time.
The temperature parameters used for the online deterministic annealing algorithm are 
$(\lambda_{\max},\lambda_{\min},\gamma)=(0.99,0.1,0.8)$, 
$\delta=0.1$, $\epsilon_n = 0.5$, and $\epsilon_s = 0.01$,
and
$(\alpha(t),\beta(t))=(\nicefrac{1}{1+0.01t}, \nicefrac{1}{1+0.9t \log t})$.
%
%
The estimated parameter $\hat\theta_1$ gets 
updated by the iterations \eqref{eq:sgd-updates}.
We have assumed $\hat\theta_1(0)=[0,1,1,0,1,1]^\T$.
%
%
%
A total of $K=4$ effective codevectors and $\hat s=2$ modes are estimated.
%

The identification error and the estimated mode switching error are shown in Fig. \ref{fig:prediction2} in comparison with the true mode switching behavior of the system. 
More specifically, the algorithm identifies a total of $\hat s=2$ modes with 
$S_1=\Sigma_3\bigcup \Sigma_4$ and
$S_2=\Sigma_1\bigcup \Sigma_2$.
In Figure \ref{fig:convergence2},
the convergence of the parameters $\cbra{\bar\theta_i}$ 
of each of the $\hat s=2$ local models detected to the actual $\cbra{\theta_{i}}_{i=1}^2$
observed are shown.
Parameter values that do not appear at $t=0$ indicate that they belong to modes identified through the bifurcation phenomenon 
after a certain critical temperature value.

%
%

%
\begin{figure}[t]
\centering
\includegraphics[trim=0 15 0 0,clip,width=0.48\textwidth]{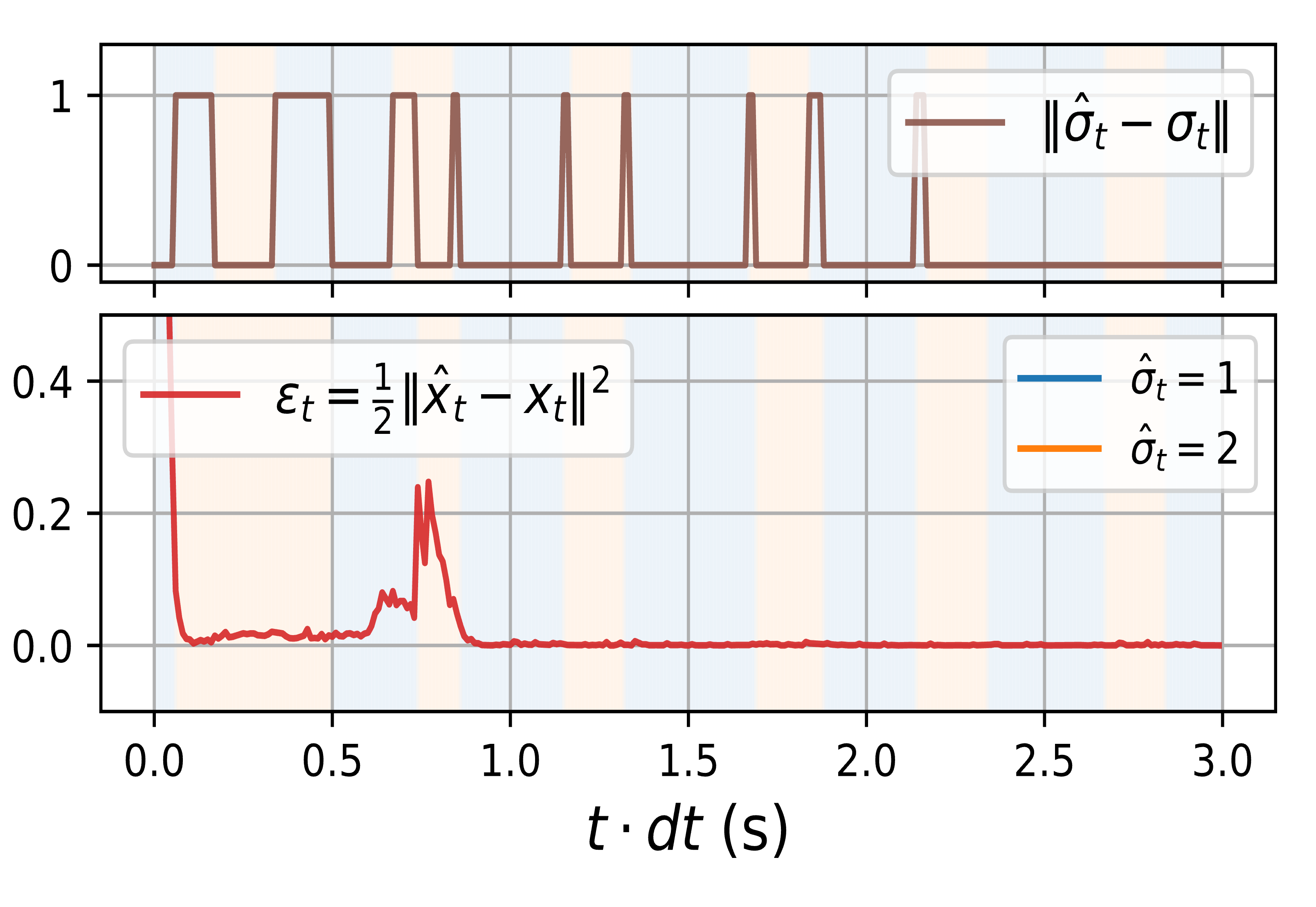}
%
\caption{Identification error over time for system \eqref{eq:exp2-system}. 
The estimated modes are also compared against the original modes.}
\label{fig:prediction2}
\vspace{-1em}
\end{figure}
\begin{figure}[t]
\centering
%
\includegraphics[trim=0 20 0 0,clip,width=0.45\textwidth]{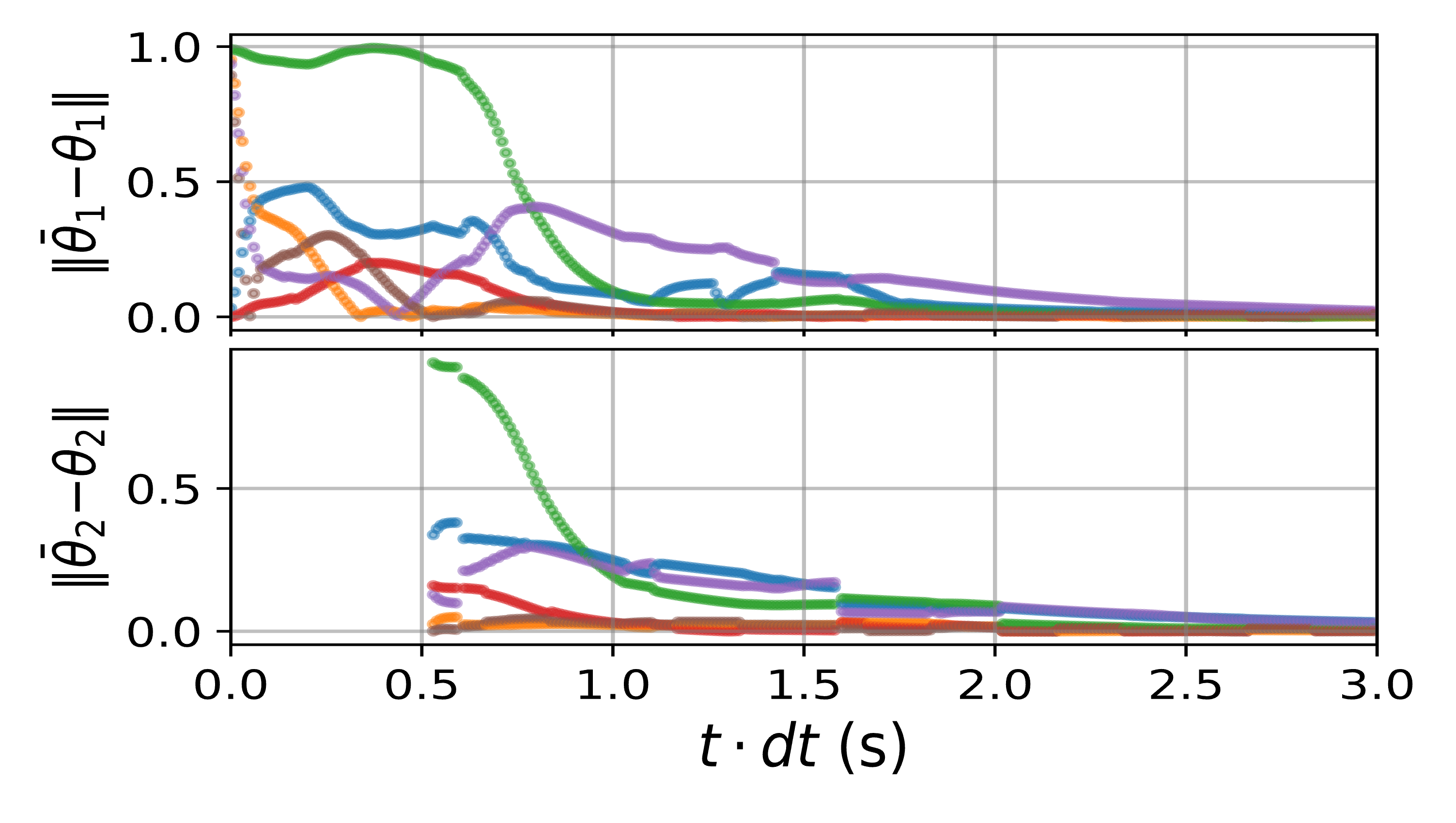}
%
\caption{Convergence of the parameters $\cbra{\bar\theta_i}_{i=1}^2$ to the true values of \eqref{eq:exp2-system}.
Parameter values that do not appear at $t=0$ indicate that belong to 
modes identified through the bifurcation phenomenon 
described in Section \ref{sSec:bifurcation}.}
\label{fig:convergence2}
\vspace{-1em}
\end{figure}

\subsection{Higher-dimensional State-Space PWA System}

A higher dimensional example is given by the following linearized PWA dynamics in the state-space domain:
\begin{equation}
\begin{aligned}
x_{t+1} = (I_{2} + \textrm{d}t 
        \begin{bmatrix}
        \begin{bmatrix} 0 & 1 & 0\\ 
                      0 & 0 & 1   
        \end{bmatrix}\\
         a_i 
        \end{bmatrix}
        x_t 
+ \textrm{d}t B_i u_t + e_t,\ 
    \begin{bmatrix} x_t\\ u_t  
        \end{bmatrix}\in S_i \\
\end{aligned}
\label{eq:exp3-system}
\end{equation}
where $x_t\in\mathbb R^3$, $u_t\in \mathbb R$, $e_t\sim \mathcal N(0,0.2)$, $i\in \cbra{1,2,3}$, 
$a_1=[-1,-2,-1]^\T$, $B_1=[0,0,1]^\T$,
$a_2=[-1,-1,-1]^\T$, $B_2=[0,1,0]^\T$,
$a_3=[-1,-2,-2]^\T$, $B_3=[0,0,1]^\T$,
$S_1 = \cbra{r\in\mathbb R^4: [0,1,0,1]^\T r \geq 1}$
$S_2 = \cbra{r\in\mathbb R^4: -1 < [0,1,0,1]^\T r < 1}$
$S_3 = \cbra{r\in\mathbb R^4: [0,1,0,1]^\T r \leq -1}$.
The online deterministic annealing algorithm parameters are chosen as
$(\lambda_{\max},\lambda_{\min},\gamma)=(0.99,0.1,0.8)$, 
$\delta=0.1$, $\epsilon_n = 0.5$, and $\epsilon_s = 0.01$,
and
$(\alpha(t),\beta(t))=(\nicefrac{1}{1+0.01t}, \nicefrac{1}{1+0.9t \log t})$.
We have also assumed $\hat\theta_1(0)=[0,0.1,0,0,0,0.1,-0.5,-0.5,-0.5,0.1,0.1,0.1]^\T$.
A total of $K=4$ effective codevectors and $\hat s=3$ modes are estimated.
The identification error and the estimated mode switching behavior are shown in Fig. \ref{fig:prediction3} in comparison with the true mode switching behavior of the system.

\begin{figure}[h]
\centering
\includegraphics[trim=0 15 0 0,clip,width=0.48\textwidth]{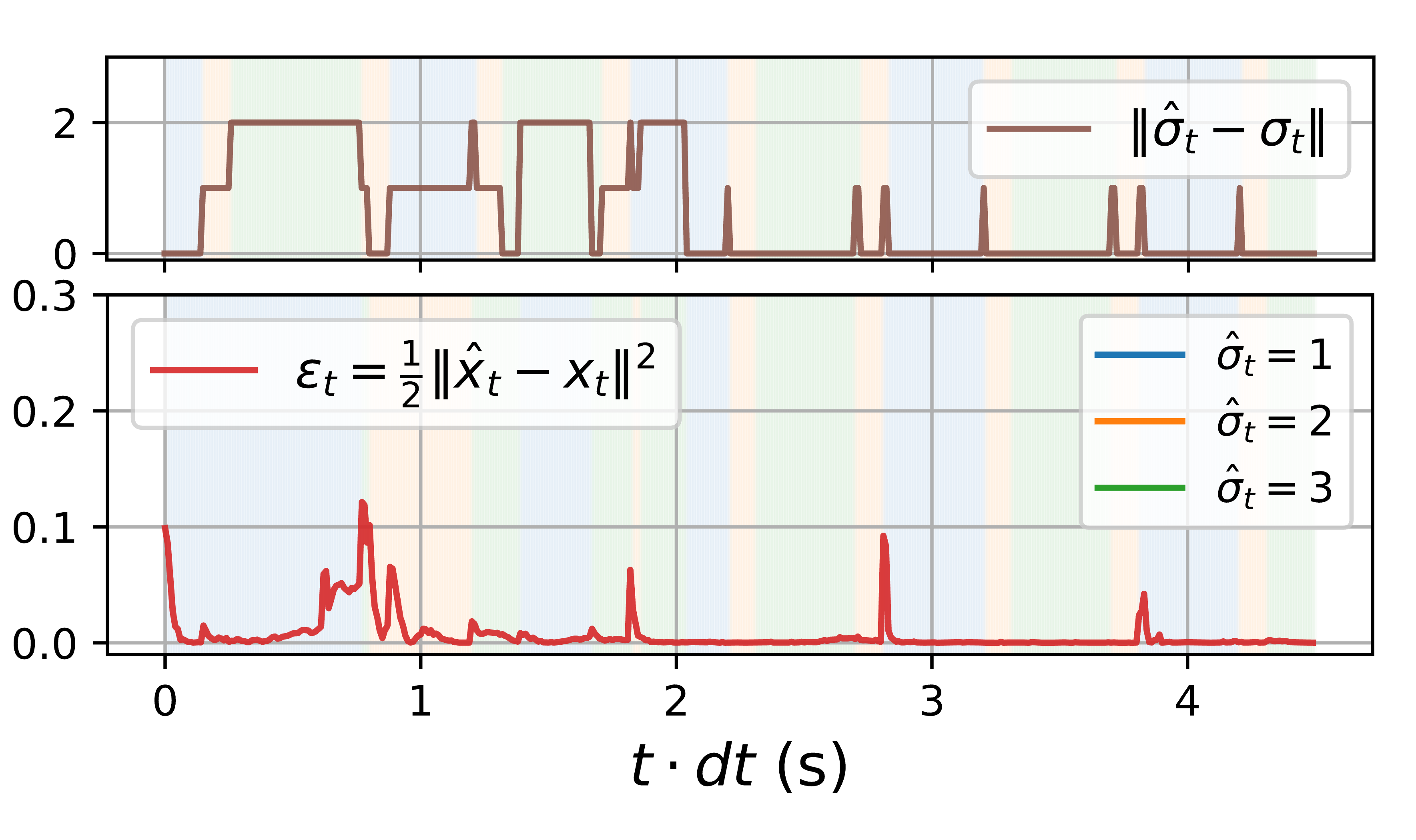}
\caption{Identification error over time for system \eqref{eq:exp3-system}. 
The estimated modes are also compared against the original modes.}
\label{fig:prediction3}
\vspace{-1em}
\end{figure}
%

\section{Conclusion}
\label{Sec:Conclusion}

A real-time system identification scheme is proposed,
appropriate for online identification
of both the modes and the subsystems of a discrete-time switched system.
The proposed method is computationally 
efficient compared to standard algebraic, 
mixed-integer programming, and offline clustering-based methods,
%
and provides real-time control over the 
performance-complexity trade-off.
Future directions will focus
on the development of an adaptive annealing schedule with respect to time-dependent changes and extensions to 
identification of both discrete- and continuous-time
partially observable piece-wise affine models in the state-space domain 
using real-time observations.
%


\bibliographystyle{IEEEtran} %
\bibliography{bib_learning.bib,bib_mavridis.bib,bib-switched.bib}


\appendices

\section{Proof of Theorem \ref{thm:identification}.} 
\label{App:identification}

We construct the system 
\begin{equation}
    \hat x_{t+1} = \hat A x_t + \hat B u_t,\quad t\in\mathbb{Z}_{+},
    \label{eq:linearhat}
\end{equation}
where $\hat A\in \mathbb{R}^{n\times n}$, and $\hat B\in \mathbb{R}^{n\times p}$.
Subtracting \eqref{eq:linear} from \eqref{eq:linearhat}, we get:
\begin{equation}
    e_{t+1} = \bar \Theta r_t,\quad t\in\mathbb{Z}_{+},
\end{equation}
where 
$e_t = \hat x_t - x_t \in\mathbb R^{n}$ is the observation error,
$r_t = [x_t^\T | u_t^\T]^\T \in\mathbb R^{n+p}$ is the augmented state-input vector as defined in \eqref{eq:r-vector}, and
$\bar \Theta = [(\hat A - A) | (\hat B - B)]$ is an augmented matrix of the system parameters of size $n\times(n+p)$.
Then \eqref{eq:theta_recursion} is equivalent to:
\begin{equation}
    \bar \Theta_{t+1} = \bar \Theta_t - \gamma e_{t+1} r_t^\T,\quad t\geq 0.
    \label{eq:theta_recursion_proof}
\end{equation}
Notice that \eqref{eq:theta_recursion_proof} can be written in the form
of a linear time-varying dynamical system:
\begin{equation}
    \bar \Theta_{t+1} = \bar \Theta_t ( I_{n+p} - \gamma r_t r_t^\T ),~ t\geq 0.
    \label{eq:ltv}
\end{equation}
By vectorizing $\bar \Theta_t$ such that $\bar \theta_t = \text{vec}(\bar \Theta_t)$, \eqref{eq:ltv} becomes:
\begin{equation}
\begin{aligned}
    \bar \theta_{t+1} &=  ( I_{n(n+p)} - \gamma \psi_t \psi_t^\T ) \bar \theta_t = \Xi_t \bar \theta_t,~ t\geq 0,
\end{aligned}
    \label{eq:ltv-vec}
\end{equation}
where $\otimes$ denotes the Kronecker product, and $\psi_t = [r_t^\T \otimes I_n]^\T$ is a $n(n+p)\times n$ matrix.
We will show that \eqref{eq:ltv-vec} is exponentially stable in the large (Definition 1, \cite{jenkins2018convergence})
as long as \eqref{eq:PE} is satisfied.
Consider the Lyapunov function candidate $V(t,\bar \theta) = \bar \theta_t^\T \bar \theta_t$.
It is obvious that there exist $k_1,k_2>0$ such that 
$k_1 \|\bar \theta\|^2\leq V(t,\bar \theta) \leq k_2 \|\bar \theta\|^2$.
Notice that 
$V(t+1,\bar \theta_{t+1})-V(t,\bar \theta_{t}) = \bar \theta_t^\T \Xi_t^\T \Xi^t \bar \theta_t$.
As a result, by summing the differences for $T$ timesteps, we get:
\begin{equation}
\begin{aligned}
V(t+&T+1,\bar \theta_{t+T+1})-V(t,\bar \theta_{t}) = \\
&= \sum_{\tau=t}^{t+T} V(\tau+1,\bar \theta_{\tau+1})-V(\tau,\bar \theta_{\tau}) \\
&= \sum_{\tau=t}^{t+T} \bar \theta_\tau^\T \pbra{ \Xi_\tau^\T \Xi_\tau - I_{n(n+p)}} \bar \theta_\tau \\
&= \bar \theta_t^\T \sbra{ \sum_{\tau=t}^{t+T} \Phi(\tau;t)^\T \pbra{ \Xi_\tau^\T \Xi_\tau - I_{n(n+p)}} \Phi(\tau;t) } \bar \theta_\tau \\
&\leq - \alpha_1 \bar \theta_t^\T I_{n(n+p)} \bar \theta_t = - \alpha_1 V(t,\bar \theta_{t}),
\end{aligned}
\label{eq:lyapunov-theta}
\end{equation}
for some $0<\alpha_1<1$. 
Here $\Phi(\tau;t)=\Xi_{t}\Xi_{t+1}\ldots \Xi_{\tau-1}$ is the transition matrix of \eqref{eq:ltv-vec}, and 
the inequality follows from condition \eqref{eq:PE}.
Notice that the first inequality in \eqref{eq:PE} is equivalent to 
$\alpha I_{n+p} \preceq \sum_{\tau=t}^{t+T} r_\tau^\T r_\tau$ and directly implies that 
$\alpha_2 I_{n(n+p)} \preceq \sum_{\tau=t}^{t+T} \psi_\tau^\T \psi_\tau$, for some $\alpha_2>0$, as well. 
As a result $\sum_{\tau=t}^{t+T} \Xi_\tau^\T \Xi_\tau \preceq \alpha_3 T I_{n(n+p)}$
for some $0<\alpha_3<1$, and, therefore, 
$\sum_{\tau=t}^{t+T} \pbra{ \Xi_\tau^\T \Xi_\tau - I_{n(n+p)} } \preceq - \alpha_4 T I_{n(n+p)}$ for some $0<\alpha_4<1$.
Finally this implies that 
$\sbra{ \sum_{\tau=t}^{t+T} \Phi(\tau;t)^\T \pbra{ \Xi_\tau^\T \Xi_\tau - I_{n(n+p)}} \Phi(\tau;t) } \leq - \alpha_1 I_{n(n+p)}$
for some $0<\alpha_1<1$ \cite{anderson1969new}.
Notice that the second inequality of \eqref{eq:PE} is necessary to ensure non-singularity of the transition matrix 
$\Phi(\tau;t)$ \cite{jenkins2018convergence}.
Finally, as an immediate result of \eqref{eq:lyapunov-theta}, 
$V(t+T+1,\bar \theta_{t+T}+1) \leq (1-\alpha_1) V(t,\bar \theta_{t})$, $\forall t\geq 0$, which 
implies uniform asymptotic stability in the large, and, due to linearity, exponential stability in the large.
%


\vspace{-2em}
\begin{IEEEbiography}[{\includegraphics[width=1in,height=1.25in,clip,keepaspectratio]
{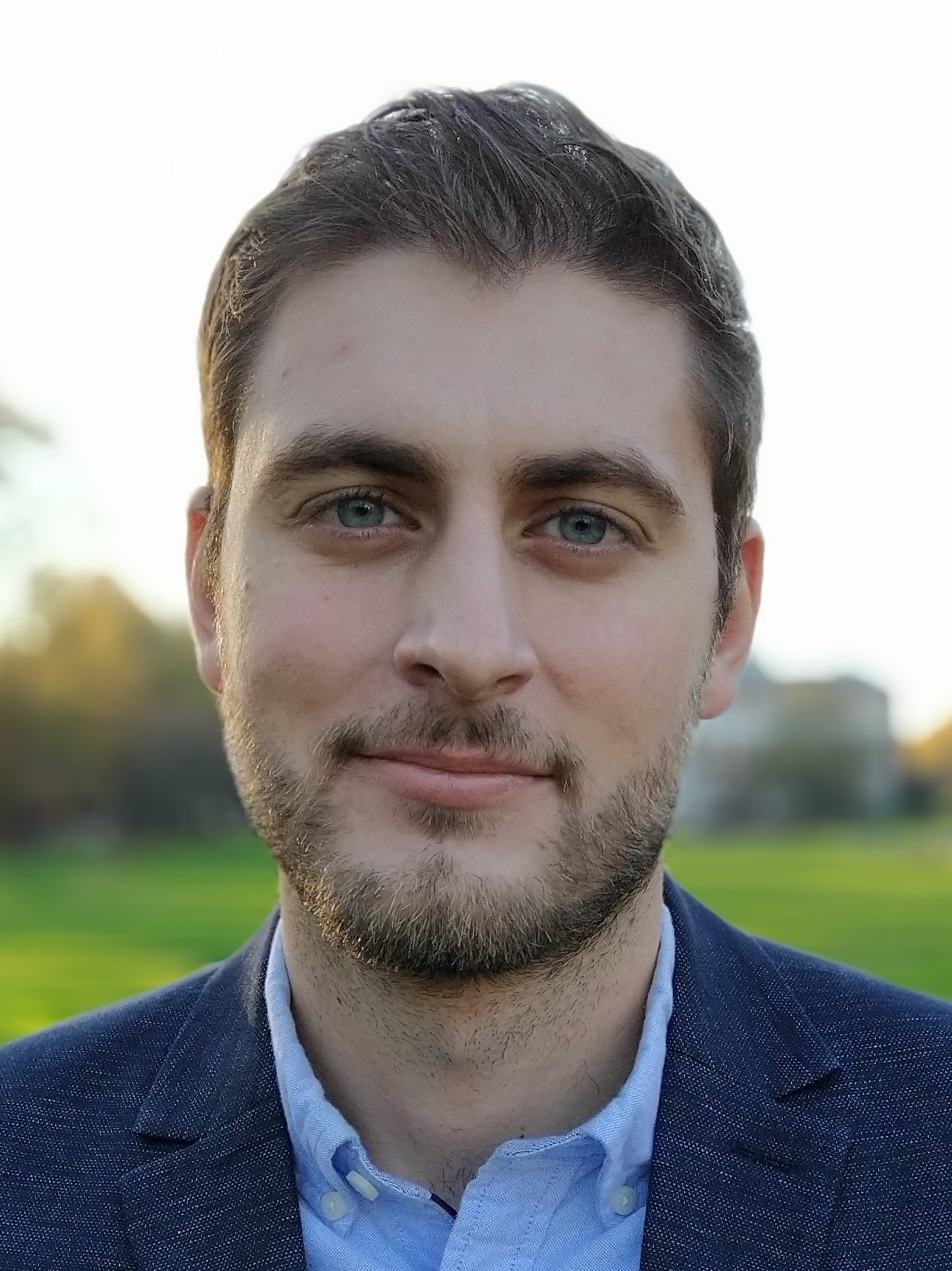}}]{Christos N. Mavridis}
received his Diploma in electrical and computer engineering from the National Technical University of Athens, Greece, in 2017,
and the M.S. and  Ph.D. degrees in electrical and computer engineering at the University of Maryland, College Park, MD, in 2021. 
His research interests include stochastic optimization, learning theory,
hybrid systems, and control theory,.

He is currently a postdoc at KTH Royal Institute of Technology, Stockholm, and has been affiliated as a research scientist with the Institute for Systems Research (ISR), University of Maryland, MD, the Nokia Bell Labs, NJ, the Xerox Palo Alto Research Center (PARC), CA, and Ericsson AB, Stockholm. 

Dr. Mavridis is an IEEE member, and a member of IEEE/CSS Technical Committee on Security and Privacy. He has received the A. James Clark School of Engineering Distinguished Graduate Fellowship and the Ann G. Wylie Dissertation Fellowship in 2017 and 2021, respectively. He has been a finalist in the Qualcomm Innovation Fellowship US, San Diego, CA, 2018, and he has received the Best Student Paper Award in the IEEE International Conference on Intelligent Transportation Systems (ITSC), 2021.
\end{IEEEbiography}

\vspace{-2em}
\begin{IEEEbiography}[{\includegraphics[width=1in,height=1.25in,clip,keepaspectratio]
{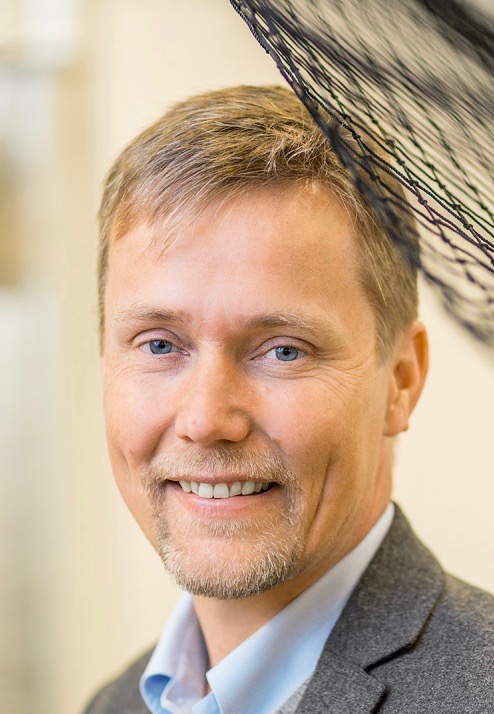}}]{Karl H. Johansson} 
 is Swedish Research Council Distinguished Professor
  in Electrical Engineering and Computer Science at KTH Royal
  Institute of Technology in Sweden and Founding Director of Digital
  Futures. He earned his MSc degree in Electrical Engineering and PhD in
  Automatic Control from Lund University. He has held visiting
  positions at UC Berkeley, Caltech, NTU and other prestigious
  institutions. His research interests focus on networked control systems and
  cyber-physical systems with applications in transportation, energy,
  and automation networks. For his scientific contributions, he has
  received numerous best paper awards and various distinctions from
  IEEE, IFAC, and other organizations. He has been awarded Distinguished
  Professor by the Swedish Research Council, Wallenberg Scholar by the
  Knut and Alice Wallenberg Foundation, Future Research Leader by the
  Swedish Foundation for Strategic Research. He has also received the
  triennial IFAC Young Author Prize and IEEE CSS Distinguished
  Lecturer. He is the recipient of the 2024 IEEE CSS Hendrik W. Bode
  Lecture Prize. His extensive service to the academic community includes being
  President of the European Control Association, IEEE CSS Vice
  President Diversity, Outreach \& Development, and Member of IEEE CSS
  Board of Governors and IFAC Council. He has served on the
  editorial boards of Automatica, IEEE TAC, IEEE TCNS and many other
  journals. He has also been a member of the Swedish Scientific Council for
  Natural Sciences and Engineering Sciences. He is
  Fellow of both the IEEE and the Royal Swedish Academy of Engineering
  Sciences.
\end{IEEEbiography}

\end{document}